\pgfplotsset{compat=newest}
\pgfplotsset{plot coordinates/math parser=false}
\newlength\figureheight
\newlength\figurewidth
\let\originalleft\left
\let\originalright\right
\renewcommand{\left}{\mathopen{}\mathclose\bgroup\originalleft}
\renewcommand{\right}{\aftergroup\egroup\originalright}
\numberwithin{equation}{section}
\newcommand{\BPP}{\mathsf{BPP }}
\newcommand{\sP}{\mathsf{\#P }}
\newcommand{\PostBQP}{\mathsf{PostBQP }}
\newcommand{\cCP}{\mathsf{coC_=P }}
\newcommand{\NP}{\mathsf{NP }}
\newcommand{\PH}{\mathsf{PH }}
\DeclareMathOperator{\pr}{Pr}
\newcommand{\poly}{\operatorname{poly}}
\newcommand{\remove}[1]{}
\newtheorem{theorem}{Theorem}
\newtheorem{lemma}[theorem]{Lemma}
\newtheorem{corollary}[theorem]{Corollary}
\newtheorem{definition}{Definition}
\newtheorem{remark}{Remark}
\newcommand{\eq}[1]{(\ref{eq:#1})}
\newcommand{\alg}[1]{\hyperref[alg:#1]{Algorithm~\ref*{alg:#1}}}
\newcommand{\defn}[1]{\hyperref[defn:#1]{Definition~\ref*{defn:#1}}}
\renewcommand{\sec}[1]{\hyperref[sec:#1]{Section~\ref*{sec:#1}}}
\newcommand{\thm}[1]{\hyperref[thm:#1]{Theorem~\ref*{thm:#1}}}
\newcommand{\lem}[1]{\hyperref[lem:#1]{Lemma~\ref*{lem:#1}}}
\newcommand{\cor}[1]{\hyperref[cor:#1]{Corollary~\ref*{cor:#1}}}
\newcommand{\prb}[1]{\hyperref[prb:#1]{Problem~\ref*{prb:#1}}}
\newcommand{\fig}[1]{\hyperref[fig:#1]{Figure~\ref*{fig:#1}}}
\title{Average-case hardness of estimating probabilities of random quantum circuits with a linear scaling in the error exponent}
\author{\textsf{Hari Krovi} }
\affil[]{Quantum Engineering and Computing\\Raytheon BBN Technologies, Cambridge, MA}
\date{\today}
\begin{document}

\maketitle
\begin{abstract}
We consider the hardness of computing additive approximations to output probabilities of random quantum circuits. We consider three random circuit families, namely, Haar random, $p=1$ QAOA, and random IQP circuits. Our results are as follows.

For Haar random circuits with $m$ gates, we improve on prior results by showing $\cCP$ hardness of average-case additive approximations to an imprecision of $2^{-O(m)}$. Efficient classical simulation of such problems would imply the collapse of the polynomial hierarchy. For constant depth circuits i.e., when $m=O(n)$, this linear scaling in the exponent is to within a constant of the scaling required to show hardness of sampling. Prior to our work, such a result was shown only for Boson Sampling in \cite{bouland2021noise}. We also use recent results in polynomial interpolation to show $\cCP$ hardness under $\BPP$ reductions rather than $\BPP^\NP$ reductions. This improves the results of prior work for Haar random circuits both in terms of the error scaling and the power of reductions.

Next, we consider random $p=1$ QAOA and IQP circuits and show that in the average-case, it is $\cCP$ hard to approximate the output probability to within an additive error of $2^{-O(n)}$. For $p=1$ QAOA circuits, this work constitutes the first average-case hardness result for the problem of approximating output probabilities for random QAOA circuits. The random QAOA circuits we consider include those coming from Sherrington-Kirkpatrick models and Erd\"{o}s-Renyi graphs. For IQP circuits, we prove our results without additional conjectures on the hardness of Ising partition functions. Indeed, a consequence of our results for IQP circuits is that approximating the Ising partition function with imaginary couplings to an additive error of $2^{-O(n)}$ is hard even in the average-case, which extends prior work on worst-case hardness of multiplicative approximation to Ising partition functions.
\end{abstract}

\section{Introduction}\label{sec:introduction}
Quantum computers hold the promise of dramatic speed-ups for certain classes of problems such as factoring \cite{Shor}, Hamiltonian simulation \cite{Low_Chuang} or solving linear systems of equations \cite{HHL} to name a few. However, these gains will likely be realized only with fault-tolerant quantum computers. While these fault-tolerant quantum processors might be years away, the era of small scale devices without fault-tolerance is here and has been termed the NISQ (noisy intermediate scale quantum) era \cite{Preskill2018quantumcomputingin}. NISQ devices are being used to implement heuristic algorithms for problems in optimization, finance, simulations of physical systems and many other areas. Whether there is any speed-up over classical algorithms for these applications still remains to be seen. However, NISQ processors are useful testbeds for certain \emph{quantum} problems such as the recent breakthrough experiment on demonstrating a time crystal \cite{time_crystal}.

Aside from physics simulations, one of the most interesting problems that NISQ devices can excel at, is sampling from output distributions of random quantum circuit families. While this problem may have limited practical use, it is nevertheless a well-defined mathematical problem \cite{aaronson2013computational,Boixo}. Recent ``quantum supremacy" experiments done for different types of random quantum circuits \cite{google_sup,Pan} suggest hardness of sampling from the output distribution by classical algorithms. This experimental evidence has a strong theoretical evidence as well. Starting with the early work of \cite{aaronson2013computational,Bremner2010ClassicalSO}, many techniques have been developed to provide evidence that classical computers will not be able efficiently sample from the output distribution of a random quantum circuit.

Building on that earlier work, \cite{Bouland2018quantum, movassagh2019cayley} showed that a classical algorithm that can exactly compute any output probability of a Haar random quantum circuit with a polynomial number of gates i.e., $m=\poly(n)$ gates would lead to the collapse of the polynomial hierarchy ($\PH$) to a finite level. These results were improved to be robust to additive error in the classical algorithm in \cite{bouland2021noise, movassagh2018efficient} using better polynomial interpolation techniques. This has made the additive error robustness of the earlier results $2^{-O(m\log m)}$, where $m$ is the depth of the quantum circuit. However, to go beyond this scaling using polynomial interpolation, both the Cayley path interpolation and truncated Taylor series (the way it was used in \cite{Bouland2018quantum}) seem insufficient. The Cayley path approach gives a rational function and if we construct a polynomial from it as was done in \cite{KMM21,bouland2021noise}, this gives a scaling of $2^{-O(m\log m)}$ due to the size of the denominator. The Taylor series approach taken in \cite{Bouland2018quantum} gives rise to a degree with an unavoidable factor of $m$, which turns out to be too high. This is because the requirement that we remain close to the random distribution even after a perturbation places a $\log m$ factor in the exponent (using the robust Berlekamp-Welch algorithm in \cite{bouland2021noise}) again leading to $2^{-O(m\log m)}$.

To improve on these results and show hardness for additive error of $2^{-O(m)}$ for a circuit with $m$ gates, we use the interpolation scheme of \cite{Bouland2018quantum}, but then use a truncated Taylor series on the probability, rather than at the level of gates, and obtain a polynomial of degree $O(m/\log m)$. This allows us to use the robust Berlekamp-Welch algorithm (which we show can be improved to give an algorithm in $\BPP$) to give a linear scaling in the exponent of the error. Here, to take the Taylor series, we use the probability as a function of interpolation parameter. The truncated Taylor series polynomial can also be shown to be close enough to the probability of the interpolated circuit. Aside from reducing the degree of the polynomial, this approach also maintains the unitarity of the evolution since each gate after interpolation is still a unitary gate. 

Additionally, as mentioned above, we improve the robust Berlekamp-Welch algorithm developed in \cite{bouland2021noise} to give an algorithm in $\BPP$ rather than $\BPP^\NP$ using recent results in polynomial interpolation \cite{kane2017robust}. In \cite{KMM21}, a $\BPP$ algorithm was given by assuming a stronger condition on the outlier probability. Improving our robustness result to $O(2^{-m})$ would show hardness of sampling for constant depth circuits since $m=O(n)$ for constant depth circuits. However, \cite{Napp_et_al} have shown that one cannot do this by a method agnostic to the circuit depth. In that work, the authors give a classical algorithm that computes output probabilities of Haar random circuits to an additive error of $2^{-n}/2^{n^b}$, where $0<b<1$.

The Quantum Approximate Optimization Algorithm (QAOA) was first introduced in \cite{farhi2014quantum} and has been used as a heuristic algorithm for optimization problems implementable on NISQ devices. Like Boson Sampling and IQP circuits, $p=1$ QAOA also turns out to be universal after post-selection as shown in \cite{farhi2019quantum}. For $p=1$ QAOA, worst-case hardness of multiplicative approximations has also been shown in \cite{farhi2019quantum}. In the context of QAOA, to our knowledge, 
random QAOA circuits have not been studied and this paper is the first to show average-case hardness of approximating output probabilities.

The distribution over QAOA circuits defined here is somewhat related to the distribution of interleaved $X$ and $Z$ diagonal circuits studied in \cite{X_Z_diagonal_t_designs}, where an approximate $t$-design property was shown for the interleaving length large enough. This was shown to be implementable using a gate-set with discrete randomness. The distribution considered here also has randomness only at the level of gates, is assumed to be a continuous distribution and can be more general than a uniform distribution of phases. If the distribution is taken to be uniform, then one can use the results of \cite{Nakata_2014} that uniform distribution of the phases can be implemented using ancillas and a discrete distribution.

The hardness of circuit families such as Instantaneous Polynomial Time (IQP) has also been developed to include robust hardness i.e., hardness of approximating output probabilities or the hardness of sampling. In \cite{Bremner2017achievingquantum}, assuming certain conjectures of polynomials over finite fields, average case hardness of approximating output probabilities has been shown. Random IQP circuits defined in that fashion also need access to three qubit gates such as $CCZ$ gates. However, assuming conjectures about the hardness of Ising partition function estimation, \cite{Bremner2017achievingquantum} shows average-case hardness of approximate sampling using single and two qubit gates. Our results do not assume additional conjectures, but however, need continuous distributions in choosing random gates.

Partition functions are closely related to output amplitudes or probabilities of quantum circuits. For instance, the famous Jones' polynomial can be shown to be the probability amplitude of a quantum circuit \cite{Freedman_et_al, AJL, How_hard}. For IQP circuits, it was shown in \cite{PhysRevLett.117.080501, IQP_Ising} that the output probability amplitudes correspond to Ising partition functions with imaginary coupling constants. Using this connection, in \cite{IQP_Ising}, hardness of worst-case multiplicative approximation of Ising partition function with imaginary coefficients was shown. In the present work, we extend this to average-case additive approximations up to additive error $2^{-O(n)}$.

$\cCP$ hardness of approximating output probabilities is equivalent to the statement that efficient classical algorithm to approximate these probabilities implies the collapse of the $\PH$ (under nondeterministic reductions). In \cite{FGHR}, $\cCP$ hardness of approximating output probabilities of quantum circuits is shown. This result is used in the recent hardness results on Haar random circuit sampling \cite{bouland2021noise, KMM21}. In \cite{fujii_et_al:LIPIcs:2016:6296, Fujii18}, $\cCP$ hardness of the one clean qubit model is studied.

To go beyond asymptotic hardness results, \cite{Dalzell2020howmanyqubitsare,Morimae2020additiveerrorfine} use conjectures in so-called fine-grained complexity to obtain results about the number of qubits needed to show quantum supremacy. Finally, while we do not study it in our paper, an interesting sampling problem called Fourier sampling defined for a particular class of circuits, has been studied in \cite{Fourier_sampling}. The recent survey \cite{rcs_survey} covers these topics including recent experiments and the task of verification of random sampling experiment.

The paper is organized as follows. In \sec{prelims}, some background on random quantum circuits, certain complexity classes and some known results about hardness is provided. In \sec{QAOA}, \sec{haar} and \sec{IQP}, we define the distributions, prove worst-case hardness and bounds on the degrees of polynomials used for interpolation for these random circuit families. In \sec{main}, we prove our main result that additive approximations to a constant fraction of these random circuits is $\cCP$ hard. Finally, in \sec{conclusions}, we present some conclusions and open questions.

\section{Preliminaries}\label{sec:prelims}
In this section, we first describe the structure of the three circuit families we consider. We also discuss some complexity classes that are used later in the paper.

We start with QAOA, focusing on the $p=1$ version. QAOA was first defined in \cite{farhi2014quantum} and was designed for optimization problems. Suppose we have an optimization problem with an objective function $C(z)$, where the goal is to maximize $C(z)$ i.e., find a bit string $z$ for which it is maximized (or approximately maximized). For most interesting problems, the function $C(z)$ is a function of constraints, where each constraint consists of only a few bits of $z$. For this problem, we can construct two unitaries in the $Z$ and $X$ basis labeled $C_Z$ and $C_X$ respectively. They are defined as
\begin{align}
    C_Z&=\sum_zC(z)\ket{z}\bra{z}=\exp(i\sum_{j,k}J_{j,k}\sigma^z_j\sigma^z_k + i\sum_k M_k\sigma^z_k)\,,\\
    C_X&=\exp(i\sum_k\beta_k \sigma^x_k)\,,
\end{align}
where $\sigma^z_k$ and $\sigma^x_k$ are the Pauli $Z$ and $X$ operators acting on the $k^{th}$ qubit.

Using the above two unitaries, QAOA can be defined as a circuit of alternating $Z$ and $X$ diagonal unitaries applied to the equal superposition over all qubits. In the original scheme, the $X$ diagonal term was chosen with a constant $\beta$ (independent of $k$). The phases in these unitaries can be stage dependent. The final state can be written as
\begin{equation}
    \ket{\Bar{\gamma},\Bar{\beta}}=\prod_{i=1}^p \exp(i\beta_iC_X)\exp(i\gamma_iC_Z)\ket{+^n}\,,
\end{equation}
where the state $\ket{+^n}$ is the equal superposition over all qubits. There are many generalizations of QAOA where every aspect from the initial state to the angles and the unitaries may be modified. These generalizations still retain the main aspect of QAOA i.e., alternating non-commuting operations applied to some interesting and easily prepared initial superposition over the qubits. When there is only one stage i.e., $p=1$, the overall circuit is 
\begin{equation}
    U_{QAOA}= C_XC_Z\,.
\end{equation} 
In this paper, we also focus on Haar random circuits and random IQP circuits. For Haar random circuits, there is no special structure except that there are a polynomial number of gates. The circuit is just an arbitrary quantum circuit composed of single and two qubit gates. The distribution of the circuits will be defined later in \sec{haar} and is the same as the one defined in earlier works such as \cite{Bouland2018quantum, movassagh2019cayley}.

The last class of circuits we consider are IQP circuits. This class was first defined in \cite{Bremner2010ClassicalSO} and later developed in \cite{PhysRevLett.117.080501, Bremner2017achievingquantum}. Circuits in this class are somewhat similar to $p=1$ QAOA and can be written in the following form.
\begin{equation}
    U_{IQP}=H^{\otimes n}C_ZH^{\times n}\,,
\end{equation}
where $C_Z$ is an arbitrary $Z$ diagonal unitary with a polynomial number of gates. The post-selected version of this class has been shown to be universal i.e, it is equal to $\PostBQP$.

To define the random circuit version of all these families, we need the definition of an \emph{architecture}. Our definition of an architecture is similar to the one from prior work such as \cite{Napp_et_al}, where a connectivity of qubits and a circuit layout is assumed.
\begin{definition}[Architecture]
Given a connectivity graph with $n$ vertices, we can arrange the qubits on the vertices and think of single qubit gates as being on the vertices and the edges would correspond to two qubit gate locations. A circuit layout is a specification of single and two qubit gate locations i.e., a sequence of vertices and edges (without a specification of the gates themselves). An architecture is defined as the connectivity graph along with a layout of the circuit.
\end{definition}
When we talk about QAOA (respectively Haar random or IQP) architectures later on, we mean architectures that have circuit layouts in the QAOA (respectively Haar and IQP) form.

Next, we briefly discuss some complexity classes that we use in this paper starting with the class $\cCP$. For the definition of other relevant classes such as $\BPP$, $\NP$ and $\sP$, one can refer to standard texts on complexity theory (e.g., \cite{arora2009computational}). An intuitive explanation of the polynomial hierarchy in the context of QAOA is also given in \cite{farhi2014quantum}. The class $\cCP$ consists of problems where one has to decide whether an efficiently computable function $f:\{0,1\}^n\rightarrow \{-1,1\}$ satisfies
\begin{equation}
    \sum_{x\in\{0,1\}^n} f(x)\neq 0\,.
\end{equation}
A reason that the class $\cCP$ is of interest is the result from \cite{T89} that $\mathsf{P}^\sP\subseteq \NP^\cCP$. This implies that efficient classical simulation of a $\cCP$ hard problem collapses the polynomial hierarchy. In \cite{FGHR}, it was shown that deciding whether the output probability of a quantum circuit is non-zero is complete for this class. The circuit construction in that paper is used in recent papers \cite{KMM21, bouland2021noise} on the hardness of Haar random circuits. In this work, we adapt this circuit to QAOA and IQP to show $\cCP$ hardness of additive approximations of output probabilities of $p=1$ QAOA and IQP circuits.

In the next three sections, we introduce the three families of random circuits namely, $p=1$ QAOA, Haar random, and IQP circuits, prove worst-case hardness and certain properties of the interpolated probability.

\section{Random QAOA circuits}\label{sec:QAOA}
The random QAOA circuits we consider are alternating random $Z$ and $X$ basis unitaries on all the qubits. Since we focus on $p=1$, these circuits have a $Z$ basis unitary circuit acting on $\ket{+}^{\otimes n}$ followed by one layer of gates forming an $X$ basis unitary. The measurement is done in the computational basis. To be more precise, we define a random QAOA circuit as follows.
\begin{definition}\label{defn:QAOA_dist}
Given a QAOA architecture, a random QAOA circuit with $p=1$ consists of a random $Z$ basis unitary on $n$ qubits followed by a random $X$ basis unitary. The random $Z$ basis unitary is composed of a sequence of $m_z$ random single or two qubit gates, where the $j^{th}$ gate is of the form
\begin{equation}
    C^{(j)}_Z=\sum_{k_j}e^{i\phi^z_{k_j}}\ket{k_j}\bra{k_j}\,,
\end{equation}
where $k_j$ runs over a basis vectors of $N$ qubits in the $Z$ basis (the index $j$ is fixed by the gate under consideration). Here $N\in\{1,2\}$ depending on whether the $j^{th}$ gate is a single or two qubit gate. The phase $\phi^z_{k_j}$ can be from any continuous distribution.

The $X$ basis unitary is a single layer of $m_x=n$ single qubit unitary gates (where $n$ is the number of qubits) of the form
\begin{equation}
    C^{(j)}_X=\sum_{k_j}e^{i\phi^x_{k_j}}\ket{k_j}\bra{k_j}\,,
\end{equation}
where $k_j$ runs over $\{0,1\}$ and $\phi^x_{k_j}$ are independent and uniformly distributed in $[0,2\pi)$.
\end{definition}
\begin{remark}
We give some examples of distributions of the QAOA phases for the $Z$ gates.
\begin{enumerate}
    \item \emph{Uniform distribution of phases:} The phases $\phi_{k_j}^z$ of the $Z$ basis unitary can be independent and uniformly distributed in the interval $[0,2\pi)$. 
    \item \emph{Sherrington-Kirkpatrick model \cite{SK_QAOA,Montanari}:} $\phi_{k_j}^z$ come from the model where the $Z$ basis unitary is of the form
    \begin{equation}
        C_Z=\exp(i\frac{1}{\sqrt{n}}\sum_{k<\ell} J_{k,\ell}\sigma^z_k\sigma^z_\ell)\,,
    \end{equation}
    where the sum runs over the set of connected qubits $k$ and $\ell$. Here $J_{k,\ell}$ are standard normal distributed $\mathcal{N}(0,1)$ or another continuous distribution. 
    \item \emph{Weighted Max-Cut on Erd\"{o}s-Renyi graphs:} This is a weighted version of the example considered in \cite{SK_QAOA, QAOA_fixed_parameters}. Here $\phi_{k_j}^z$ come from a Max-Cut cost function on a random Erd\"{o}s-Renyi graph $G$. We can center the distribution by defining
    \begin{equation}
        J_{k,\ell}=\frac{\text{Adj}_{k,\ell}-\mathbb{E}(\text{Adj}_{k,\ell})}{\sqrt{\text{Var}(\text{Adj}_{k,\ell})}}\,,
    \end{equation}
    where Adj is the weighted adjacency matrix of the graph $G$ and Var is the variance of a given entry in Adj. To make the overall distribution continuous, the weights must be chosen from a continuous distribution.
\end{enumerate}
\end{remark}

Next, we define a parameterized circuit, (which will be used to interpolate between the worst case QAOA circuit and a random one).
\begin{definition}\label{defn:QAOA_interp}
Suppose we have a QAOA circuit $C=C_XC_Z$ (the worst case circuit), where $C_X$ and $C_Z$ are the $X$ and $Z$ basis unitaries that are composed of a total of $m$ gates defined as follows.
\begin{align}
    C^{(j)}_X&=\sum_{k_j}\exp(i h^x_{k_j})\ket{x_{k_j}}\bra{x_{k_j}}\,,\\
    C^{(j)}_Z&=\sum_{k_j}\exp(i h^z_{k_j})\ket{z_{k_j}}\bra{z_{k_j}}\,,
\end{align}
where $h^x_{k_j}$ and $h^z_{k_j}$ are arbitrary real numbers in $[0,2\pi)$ (e.g., coming from the worst-case instance). Here $x_{k_j}$ and $z_{k_j}$ are basis vectors in the $X$ and $Z$ bases. We consider the following interpolation (with the parameter $\theta$) between each gate in this circuit and a random gate.
\begin{align}
    C^{(j)}_X(\theta)&=\sum_{k_j}\exp( i( h^x_{k_j} + (1-\theta/m)\phi^x_{k_j}))\ket{x_{k_j}}\bra{x_{k_j}}\,,\\
    C^{(j)}_Z(\theta)&=\sum_{k_j}\exp(i(h^z_{k_j}+(1-\theta/m)\phi^z_{k_j}))\ket{z_{k_j}}\bra{z_{k_j}}\,,
\end{align}
where $\phi^x_{k_j}$ and $\phi^z_{k_j}$ are the random phases in $[0,2\pi)$ as in \defn{QAOA_dist}. The $X$ basis unitary is the product of the $C^{(j)}_X(\theta)$ and the $Z$ basis unitary is the product of the $C^{(j)}_Z(\theta)$, where each gate unitary is assumed to act as the identity outside its support.

The full interpolated circuit is $C(\theta)=C_X(\theta)C_Z(\theta)$, where we can write
\begin{equation}\label{eq:C(theta)}
    C(\theta)=\sum_{k,k'}\exp(ih_{k,k'})\exp(i(1-\theta/m)\phi_{k,k'})X_{k}Z_{k'}\,,
\end{equation}
where
\begin{equation}\label{eq:h_phi}
    \phi_{k,k'}=\sum_{i=1}^{m_x}\phi_{k_i}^x+\sum_{i=1}^{m_z}\phi_{k'_i}^z\,\text{ and }\,h_{k,k'} = \sum_{i=1}^{m_x} h^x_{k_i}+\sum_{i=1}^{m_z}h^z_{k'_i}\,,
\end{equation}
and 
\begin{equation}
    X_k=\ket{x_k}\bra{x_k}\,\text{ and }\,Z_k'=\ket{z_k'}\bra{z_k'}\,.
\end{equation}
\end{definition}
\begin{remark}
Note that with the above interpolation, we have a unitary evolution and that $C(m)=C$ i.e., at $\theta=m$, we get the worst-case circuit and when $\theta=0$, we have a random circuit. However, the expression for the probability of outcome $0^n$ is not a polynomial in $\theta$. To get a polynomial in $\theta$, we use standard results in approximation theory later in \thm{polynomial}.
\end{remark}

We now define $p(\theta)$ as the probability of getting the outcome $0^n$ i.e.,
\begin{equation}
    p(\theta)=|\bra{0^n}C(\theta)\ket{+^n}|^2\,.
\end{equation}

The result below puts the above expression into a specific form that will be useful later in \thm{polynomial}.
\begin{lemma}\label{lem:p_theta_QAOA}
For the interpolation defined in \defn{QAOA_interp}, the probability $p(\theta)$ can be written as follows.
\begin{equation}
    p(\theta)=\sum_{r}e^{-i(\theta/m)\Delta \phi_r)}A_r\,,
\end{equation}
where $|A_r|= 1/2^{3n}$ and $|\Delta\phi_r|/m=O(1)$ and $r$ runs over a set of size $2^{4n}$.
\end{lemma}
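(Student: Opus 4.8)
The plan is to diagonalize $C(\theta)$ in the $X$- and $Z$-eigenbases, insert the elementary single-qubit overlaps between $X$- and $Z$-basis vectors, and then expand the modulus squared into a double sum indexed by a set of size $2^{4n}$. Concretely, starting from the form of $C(\theta)$ in \eqref{eq:C(theta)},
\[
\bra{0^n}C(\theta)\ket{+^n}=\sum_{k,k'}e^{ih_{k,k'}}\,e^{i(1-\theta/m)\phi_{k,k'}}\,\braket{0^n|x_k}\braket{x_k|z_{k'}}\braket{z_{k'}|+^n},
\]
a sum of $2^{2n}$ terms, where $k$ indexes the $2^n$ basis vectors of the $X$ basis and $k'$ the $2^n$ basis vectors of the $Z$ basis. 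Each of the three overlaps factorizes over qubits; since $\braket{0|+}=\braket{0|-}=\braket{1|+}=1/\sqrt2$, one gets $\braket{0^n|x_k}=\braket{z_{k'}|+^n}=2^{-n/2}$ for all $k,k'$, while $\braket{x_k|z_{k'}}=\varepsilon_{k,k'}\,2^{-n/2}$ with $\varepsilon_{k,k'}\in\{-1,+1\}$ the parity of the number of positions at which $x_k$ is $\ket{-}$ and $z_{k'}$ is $\ket{1}$. Hence $\bra{0^n}C(\theta)\ket{+^n}=2^{-3n/2}\sum_{k,k'}\varepsilon_{k,k'}e^{ih_{k,k'}}e^{i(1-\theta/m)\phi_{k,k'}}$.

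Next I would form $p(\theta)=|\bra{0^n}C(\theta)\ket{+^n}|^2$ by multiplying this sum by its complex conjugate, obtaining a double sum over pairs $(k,k')$ and $(l,l')$:
\[
p(\theta)=2^{-3n}\sum_{k,k',l,l'}\varepsilon_{k,k'}\varepsilon_{l,l'}\,e^{i(h_{k,k'}-h_{l,l'})}\,e^{i(1-\theta/m)(\phi_{k,k'}-\phi_{l,l'})}.
\]
Letting $r=(k,k',l,l')$ range over this index set, whose size is $(2^n)^4=2^{4n}$, set $\Delta\phi_r:=\phi_{k,k'}-\phi_{l,l'}$ and $A_r:=2^{-3n}\varepsilon_{k,k'}\varepsilon_{l,l'}\,e^{i(h_{k,k'}-h_{l,l'})}\,e^{i\Delta\phi_r}$. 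Factoring $e^{i(1-\theta/m)\Delta\phi_r}=e^{i\Delta\phi_r}e^{-i(\theta/m)\Delta\phi_r}$ shows the $r$-th summand equals $e^{-i(\theta/m)\Delta\phi_r}A_r$; and since $A_r$ is $2^{-3n}$ times a product of the signs $\varepsilon$ and unit-modulus phases, $|A_r|=2^{-3n}=1/2^{3n}$, as claimed.

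Finally, for the bound on $\Delta\phi_r$: by \eqref{eq:h_phi}, $\phi_{k,k'}=\sum_{i=1}^{m_x}\phi^x_{k_i}+\sum_{i=1}^{m_z}\phi^z_{k'_i}$ is a sum of $m_x+m_z=m$ single-gate phases, so $|\Delta\phi_r|\le|\phi_{k,k'}|+|\phi_{l,l'}|$ is at most $2m$ times the largest magnitude of a single-gate phase, giving $|\Delta\phi_r|/m=O(1)$. The argument is essentially bookkeeping, and the one point that deserves a moment's care is exactly this: that each single-gate phase is $O(1)$. This is immediate whenever the phase distribution has bounded support (the $X$-phases, uniform on $[0,2\pi)$, and the uniform-$Z$-phase example), and for the Sherrington-Kirkpatrick and Erd\"{o}s-Renyi families each one- or two-qubit $Z$-gate contributes a phase $\pm h$ with $|h|=O(1)$ with high probability over the circuit draw (indeed $|h|=O(\sqrt{(\log n)/n})$ in the SK case, since $h=\pm J_{k,\ell}/\sqrt n$), so the $O(1)$ bound on $|\Delta\phi_r|/m$ is unaffected.
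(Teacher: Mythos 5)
Your proposal is correct and follows essentially the same route as the paper: expand $C(\theta)$ over the $X$- and $Z$-basis projectors, evaluate the overlaps $\braket{0^n|x_k}\braket{x_k|z_{k'}}\braket{z_{k'}|+^n}=\pm 2^{-3n/2}$, and square to get the $2^{4n}$-term sum with $|A_r|=2^{-3n}$ and $|\Delta\phi_r|\le 2\pi m$. Your explicit tracking of the sign $\varepsilon_{k,k'}$ and the remark about unbounded phases in the SK/Erd\H{o}s--R\'enyi cases are slightly more careful than the paper's write-up (which takes all phases in $[0,2\pi)$ by definition), but the argument is the same.
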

\begin{proof}
Using the expression for the interpolated circuit in \eq{C(theta)}, this can be written as 
\begin{equation}
    p(\theta)=\sum_{r=(k,k',\ell,\ell')}\exp(i(\Delta h_r+(1-\theta/m)\Delta \phi_r))\bra{0^n}X_{k}Z_{k'}\ket{+^n}\bra{+^n}Z_{\ell'}X_{\ell}\ket{0^n}\,,
\end{equation}
where each element of the tuple $r=(k,k',\ell,\ell')$ runs from $0$ to $2^n$ and
\begin{equation}
    \Delta h_r=(h_{k,k'}-h_{\ell,\ell'})\,,\hspace{0.3in}\Delta\phi_r=(\phi_{k,k'}-\phi_{\ell,\ell'})\,.
\end{equation}
From the definition of $\Delta\phi_r$ above and \eq{h_phi}, we have $|\Delta\phi_r|/m\leq 2\pi$ since we have $\phi_{k_i}^{x}\leq 2\pi$ and $\phi_{k_i}^{z}\leq 2\pi$ for each $i$. Now defining
\begin{equation}
    A_r=e^{i(\Delta h_r+\Delta \phi_r)}\bra{0^n}X_{k}Z_{k'}\ket{+^n}\bra{+^n}Z_{\ell'}X_{\ell}\ket{0^n}\,,
\end{equation}
we also see that $|A_r|=1/2^{3n}$ since
\begin{equation}
    \bra{0^n}X_{k}Z_{k'}\ket{+^n}=\frac{1}{2^n\sqrt{2^n}}\,.
\end{equation}
\end{proof}

Next we prove a hiding lemma for the distribution considered here. It shows that one can focus on the probability of getting $0^n$ outcome in proving hardness, similar to the ones proved in the case of random circuit sampling \cite{Bouland2018quantum}, \cite{movassagh2019cayley}, IQP circuits \cite{Montanaro_2017} and boson sampling \cite{aaronson2013computational}.
\begin{lemma}\label{lem:hiding_QAOA}
Define $p_z$ to be the probability of obtaining some $Z$ basis outcome string $z$ on $n$ qubits.
\begin{equation}
    p_z = |\bra{z}C\ket{+^n}|^2\,,
\end{equation}
where $C$ is a random QAOA circuit. Then, we have that for a given $z$,
\begin{equation}
    \pr_{C}[p_z] = \pr_C[p_0]\,,
\end{equation}
where, $p_0$ is the probability of getting the outcome $0^n$.
\end{lemma}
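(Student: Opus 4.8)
The plan is to establish the stronger statement that the random variables $p_z$ and $p_0$ have the same distribution over the random choice of QAOA circuit $C$ (which is how the identity $\pr_C[p_z] = \pr_C[p_0]$ should be read), by the standard ``hiding'' argument: flipping the measured outcome from $0^n$ to $z$ can be absorbed into the last layer of the circuit without changing its law. In \defn{QAOA_dist} the last layer is the $X$-basis unitary $C_X = \bigotimes_{k=1}^n C_X^{(k)}$, a tensor product of single-qubit gates diagonal in the $X$ basis whose phases are i.i.d.\ uniform on $[0,2\pi)$ and independent of the $Z$-basis part $C_Z$. Only this uniformity of the $X$-layer phases is used, so the argument is insensitive to the (arbitrary continuous) distribution of the $Z$-phases.

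First I would write $\ket{z} = X^z\ket{0^n}$ with $X^z = \bigotimes_k X_k^{z_k}$, so that $\bra{z} = \bra{0^n}X^z$ and
\[
  p_z = \bigl|\bra{0^n}X^z\, C_X C_Z\ket{+^n}\bigr|^2 .
\]
Then I would push $X^z$ into the $X$-layer: since each $X_k$ is itself diagonal in the $X$ basis, with $X_k\ket{+} = \ket{+}$ and $X_k\ket{-} = e^{i\pi}\ket{-}$, the product $X_k^{z_k}C_X^{(k)}$ is again a single-qubit $X$-diagonal gate, equal to $C_X^{(k)}$ with its $\ket{-}$-phase shifted by $\pi z_k$ modulo $2\pi$. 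Hence $\widetilde C_X \coloneqq X^z C_X$ is a legitimate $X$-layer and $p_z = |\bra{0^n}\widetilde C_X C_Z\ket{+^n}|^2$.

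To finish, I would observe that the substitution $(C_X,C_Z)\mapsto(\widetilde C_X,C_Z)$ is a measure-preserving bijection of the random-QAOA ensemble onto itself: it fixes $C_Z$ together with its independence from $C_X$, and on $C_X$ it merely translates, for each $k$, the $\ket{-}$-phase of the $k$-th $X$-gate by the constant $\pi z_k$, which leaves the uniform law on the circle unchanged. Consequently $\widetilde C \coloneqq \widetilde C_X C_Z$ is again a random QAOA circuit, and the display above exhibits $p_z$ as $p_0$ evaluated at $\widetilde C$; equality of the two distributions follows at once. This mirrors the hiding lemmas for random circuit sampling, IQP, and boson sampling cited above.

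I do not expect a genuine obstacle; the only point needing care is the claim that absorbing $X^z$ into the $X$-layer yields a circuit of the same ensemble, and this is precisely where the uniformity of the $X$-layer phases over the full interval $[0,2\pi)$ is used. Alternatively one could conjugate $C_Z$ by $X^z$ and leave $C_X$ untouched, but this would require the $Z$-phase distribution to be invariant under the induced permutation of each gate's phases. This holds for the Sherrington-Kirkpatrick and Erd\"{o}s-Renyi examples of the Remark by symmetry of the couplings about $0$, but not for a completely general continuous distribution, so routing the argument through $C_X$ is cleaner.
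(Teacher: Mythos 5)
Your proposal is correct and follows essentially the same route as the paper: absorb the bit flip $X^z$ into the uniformly random $X$-basis layer, note that translating the uniform phases by $\pi z_k$ is measure-preserving, and conclude that $p_z(C)=p_0(\widetilde C)$ with $\widetilde C$ equidistributed with $C$. Your write-up is in fact a cleaner rendering of the paper's argument (which phrases the same absorption via a Hamiltonian $H'$ with somewhat loose normalization), and your closing remark about why the argument must be routed through $C_X$ rather than $C_Z$ correctly identifies where the uniformity assumption is essential.
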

\begin{proof}
Given a bit string $z$, we can create a new Hamiltonian $H'$ defined as
\begin{equation}
    H'=\sum_k H_k = \sum_{k\in I_z} \frac{1}{4}\sigma^x_k\,,
\end{equation}
where the index set $I_z$ contains bits where the string $z$ has a one. In the above equation, $H_k$ is either $\sigma^x$ or the identity depending on whether $k\in I_z$ or not, respectively. This set of gates is applied along with $C_X$. Applying this does not change the distribution of the random circuit since the phases in the $X$ basis are picked uniformly random. To see this, observe that the $X$ part of the circuit is now
\begin{equation}
   \exp(iH')C_X = \exp(\sum_k 2\pi i (h^x_k+H_k + \phi_k^x) )=\exp( \sum_k 2\pi i (h^x_k + \phi_k^{'x}))\,,
\end{equation}
where $h^x_k$ and $\phi_k^x$ are from \defn{QAOA_interp} and
\begin{equation}
    \phi_k^{'x} = \phi_k^x + H_k\,.
\end{equation}
This shows that $\phi_k^{'x}$ is also uniform. The result of this is a set of $X$ gates (up to phase) on qubits that have a one in the string $z$. There is an overall phase of $i^{|z|}$ in the amplitude of getting the outcome $0^n$, which disappears in the probability. This gives us $p_z(C)=p(C_z)$ some $C_z$, where $C_z$ is distributed according to the same distribution as $C$.
\end{proof}

\subsection{Worst-case hardness}
In this section, we prove hardness of worst case additive approximations. In \cite{farhi2019quantum}, hardness of worst case multiplicative approximation was proved. In \cite{Dalzell2020howmanyqubitsare}, worst-case hardness of additive approximation shown for $p=1$ QAOA circuits with access to CCZ gates via a reduction from hardness of IQP circuits was shown. Here we remove that condition and show hardness with only single and two-qubit gates in \thm{Ising}.

\begin{theorem}\label{thm:ccp_QAOA}
    There exists a family of $p=1$ QAOA circuits $C$ for every $n$, large enough, such that it is $\cCP$ hard to approximate the probability $p(C)$ of getting the outcome $0^n$, to within additive error of $1/2^{2n+1}$. Equivalently, it is $\cCP$ hard to decide if $p(C)=0$ or if $p(C)\geq 1/2^{2n}$.
\end{theorem}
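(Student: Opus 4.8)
The plan is to reduce from the $\cCP$-hardness of deciding whether a designated output probability of a general quantum circuit is nonzero, established in \cite{FGHR}, and to compile the hard instance into the $p=1$ QAOA form of \defn{QAOA_dist} with the phases chosen worst-case rather than random. I would take the $\cCP$-complete problem to be: given an efficiently computable $f\colon\{0,1\}^s\to\{-1,1\}$, decide whether $\sum_x f(x)\neq 0$; by the standard construction underlying \cite{FGHR} this is equivalent to deciding whether the amplitude $a_V=\langle 0^{n'}|V|0^{n'}\rangle=2^{-s}\sum_x f(x)$ of an explicit poly-size circuit $V$ on $n'\ge s$ qubits is zero or satisfies $|a_V|\ge 2^{-s}$ (using that $\sum_x f(x)$ is an integer). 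Hence it suffices to build a $p=1$ QAOA circuit on $n$ qubits whose all-zeros amplitude equals $a_V$ up to a fixed, explicitly known constant.

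To do this I would use that $p=1$ QAOA is universal under postselection \cite{farhi2019quantum}: a single $Z$-diagonal layer $C_Z$ followed by a single $X$-rotation layer $C_X$, with a designated set of ancilla outputs postselected, realizes the amplitudes of an arbitrary circuit. I would turn $V$ into such a circuit $C=C_XC_Z$ in which all postselected ancillas are required to read $0$, so that the string $0^n$ comprises the logical output $0^{n'}$ together with those ancilla zeros and $p(C)=|\langle 0^n|C_XC_Z|+^n\rangle|^2=|c|^2|a_V|^2$, where $c$ is the product of the (fixed) normalization factors picked up in the gate-teleportation/postselection gadgets. The legal gadget pieces are $CZ$ and single-qubit $Z$-rotations inside $C_Z$ (to build the entangling resource and set measurement angles) and single-qubit $X$-rotations followed by computational-basis readout inside $C_X$ together with postselection (to implement the nonclassical operations of $V$).

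The one place where three-qubit gates would naively appear is in realizing the Boolean function computing $f$, i.e.\ in the Toffoli/$CCZ$ gates of $V$ — which is precisely why \cite{Dalzell2020howmanyqubitsare} allowed $CCZ$. To remove this, I would replace each such interaction by a constant-size gadget using only $CZ$'s, single-qubit gates and a constant number of postselected ancillas: first compile every Toffoli of $V$ over the one- and two-qubit Clifford$+T$ set, then realize each resulting gate by a standard QAOA/IQP teleportation gadget (in the spirit of \cite{Bremner2017achievingquantum}), or alternatively use a direct ``parity-ancilla'' gadget for the $ZZZ$-type term; one then checks that each postselected ancilla contributes only a fixed factor $2^{-1/2}$. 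This keeps all gates one- or two-qubit and keeps $|c|$ an explicit power of $2$, so by padding $C$ with a constant number of idle qubits one can arrange $|c|^2=2^{-(n-s)}$, whence $p(C)=2^{-(n+s)}\big(\sum_x f(x)\big)^2$ is either $0$ or at least $2^{-2n}$ since $s\le n$. An additive approximation to within $2^{-2n-1}$ then separates the two cases, proving both the approximation hardness and the $0$-versus-$\ge 2^{-2n}$ promise version. Although the construction is phrased for the outcome $0^n$, \lem{hiding_QAOA} shows this loses no generality.

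The step I expect to be the main obstacle — and the part genuinely new compared with \cite{Dalzell2020howmanyqubitsare} and the $CCZ$-based IQP constructions — is the gadget simulating the three-qubit interactions inside a single $Z$-diagonal layer using only one- and two-qubit $Z$-diagonal gates plus boundedly many postselected ancillas, while simultaneously (i) keeping everything in the rigid $C_XC_Z|+^n\rangle$ form of $p=1$ QAOA and (ii) controlling the overall normalization $c$ precisely enough that the ``yes'' probability lands at exactly $2^{-2n}$ rather than an uncontrolled $2^{-O(n)}$; the latter accounting is what forces the qualifier ``$n$ large enough''.
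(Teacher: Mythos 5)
Your reduction is sound and would prove the theorem, but it is routed differently from the paper. The paper's proof of \thm{ccp_QAOA} is a direct two-line amplitude calculation: it sets $\beta=\pi/4$ so that the mixer $\exp(i\beta H_X)$ acts as a Hadamard layer up to the phases $(\pm i)^{|x|}$, pre-twists the $\cCP$-hard function to $f(x)=(-i)^{|x|}\tilde f(x)$ so those phases cancel, and obtains $p(0^n)=2^{-2n}\bigl|\sum_x\tilde f(x)\bigr|^2$ exactly on $n$ qubits --- no ancillas, no postselection, and hence none of the normalization bookkeeping for $|c|$ that your argument requires. The entire gadget content is quarantined in the follow-up \thm{Ising}, which shows $C_Z$ is realizable in Ising (one- and two-local $Z$-diagonal) form. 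What you flag as the ``main obstacle'' --- simulating the three-qubit interactions of the Toffoli/$CCZ$ gates inside a single diagonal layer --- is resolved in the paper exactly as you propose as your first alternative: compile $U_{\tilde f}$ over the one- and two-qubit Clifford$+T$ set (so the only non-diagonal gates are Hadamards) and teleport each Hadamard onto a postselected ancilla via the $CZ$-based gadget of \cite{Bremner2010ClassicalSO} as adapted in \cite{farhi2019quantum, Dalzell2020howmanyqubitsare}; no new ``parity-ancilla'' gadget for a $ZZZ$ term is needed. The trade-off is that your universality-based route forces you to carry the explicit power-of-two normalization $|c|^2=2^{-(n-s)}$ and pad with idle qubits to land the yes-case probability at $2^{-2n}$, whereas the paper's phase-twisting trick makes the $2^{-2n}$ bound immediate; conversely, your version makes the ancilla/postselection accounting explicit, which the paper leaves implicit when \thm{Ising} is combined with \thm{ccp_QAOA}. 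One small caution: invoking \lem{hiding_QAOA} here is unnecessary (and not quite applicable), since that lemma concerns the random-circuit distribution, while the worst-case statement only ever refers to the outcome $0^n$.
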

\begin{proof}
Suppose $\tilde{f}$ is an arbitrary $\cCP$ hard function. Define a new function $f:\{0,1\}^n\rightarrow \{-1,1\}$ as follows.
\begin{equation}
    f(x)=(-i)^{|x|}\tilde{f}(x)\,,
\end{equation}
where $|x|$ is the Hamming weight of the bit string $x$ and $i$ is the imaginary unit. The QAOA circuit can now be defined as
\begin{equation}
    U=\exp(i\beta H_X)C_Z\,,
\end{equation}
applied to the initial state $\ket{+^n}$ with $\beta=\pi/4$. Here
\begin{equation}
    H_X=\sum_{j=1}^n \sigma^x_j\,,
\end{equation}
and $C_Z$ is a $Z$ diagonal unitary that applies phases $f(z)$ for each $Z$ basis state $\ket{z}$. In the next result (\thm{Ising}), we show how to implement this as an Ising interaction. Now, applying $U$ to the initial state gives
\begin{equation}
    |\bra{0^n}U\ket{+^n}|^2 = \Bigg|\frac{1}{\sqrt{2^n}}\sum_z \bra{0^n}\exp(i\beta H_X)f(z)\ket{z}\Bigg|^2\,.
\end{equation}
Since $\beta=\pi/4$, we have
\begin{equation}
    \exp(-i\beta \sigma^x)\ket{0}=\cos\beta \ket{0} - i\sin\beta\ket{1}=\frac{1}{\sqrt{2}}(\ket{0}-i\ket{1})\,.
\end{equation}
Using this, we get
\begin{align}
    |\bra{0^n}U\ket{+^n}|^2&=\Bigg|\frac{1}{\sqrt{2^n}}\sum_x (\cos\beta)^{n-|x|}(-i\sin\beta)^{|x|}f(x)\Bigg|^2\\
    &=\Bigg|\frac{1}{2^n}\sum_x (-i)^{|x|}f(x)\Bigg|^2\\
    &=\frac{1}{2^{2n}}\Bigg|\sum_x\Tilde{f}(x)\Bigg|^2\,.
\end{align}
If we have a classical algorithm $\mathcal{O}$ that takes as input the QAOA circuit $U$ and outputs the probability $p(0^n)$ up to an additive error of $1/2^{2n+1}$, then we can decide if $p(0^n)=0$ or if $p(0^n)\geq 1/2^{2n}$. This means that we can decide if $\sum_x \Tilde{f}(x)$ is non-zero, which is a $\cCP$ hard problem.
\end{proof}

\begin{theorem}\label{thm:Ising}
    The operator $C_Z$ in \thm{ccp_QAOA} above can be chosen to be an Ising interaction of the form
    \begin{equation}
        C_Z=\exp(iH_Z)\,,
    \end{equation}
    where
    \begin{equation}
        H_Z=\sum_j b_j\sigma^z_j + \sum_{j,k}c_{j,k}\sigma^z_j\sigma^z_k \, \text{ with } b_j, c_{j,k}\in \mathbb{R}\,.
    \end{equation}
\end{theorem}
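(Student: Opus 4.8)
The plan is to realize the $Z$-diagonal unitary $C_Z$ of \thm{ccp_QAOA}, which applies the phase $f(z)=(-i)^{|z|}\tilde f(z)$, as $\exp(iH_Z)$ with $H_Z$ a two-local Ising Hamiltonian, by choosing the underlying $\cCP$-hard instance $\tilde f$ suitably. The key observation is that the factor $(-i)^{|z|}$ is itself generated by a one-local Ising term: since $\sigma^z_j\ket z=(-1)^{z_j}\ket z$ and $z_j=\tfrac12\big(1-(-1)^{z_j}\big)$, on $\ket z$ one has $(-i)^{|z|}=\exp\!\big(-i\tfrac\pi2\sum_jz_j\big)=e^{-i\pi n/4}\exp\!\big(i\tfrac\pi4\sum_j(-1)^{z_j}\big)$, i.e. $(-i)^{|z|}$ is produced by $\tfrac\pi4\sum_j\sigma^z_j$ up to a global phase. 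So it is enough to pick $\tilde f$ so that the remaining diagonal part $\operatorname{diag}_z \tilde f(z)$ is generated by a two-local Ising Hamiltonian while keeping the reduction $\cCP$-hard.

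Concretely, I would take $\tilde f(z)=\exp\!\big(i\pi n/4+i\sum_j\mu_j(-1)^{z_j}+i\sum_{j,k}\nu_{j,k}(-1)^{z_j}(-1)^{z_k}\big)$ with real $\mu_j,\nu_{j,k}$; the prefactor $e^{i\pi n/4}$ is harmless since it does not affect whether $\sum_z\tilde f(z)\neq0$ and only serves to cancel the global phase above. Then $C_Z$ applies the phase $(-i)^{|z|}\tilde f(z)=\exp\!\big(i\sum_j(\mu_j+\tfrac\pi4)(-1)^{z_j}+i\sum_{j,k}\nu_{j,k}(-1)^{z_j}(-1)^{z_k}\big)$, so $C_Z=\exp(iH_Z)$ with $H_Z=\sum_jb_j\sigma^z_j+\sum_{j,k}c_{j,k}\sigma^z_j\sigma^z_k$, $b_j=\mu_j+\tfrac\pi4$ and $c_{j,k}=\nu_{j,k}$, all real, which is the stated form. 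Re-running the computation in the proof of \thm{ccp_QAOA} (which used only $|\tilde f(z)|=1$, not $\tilde f(z)\in\{\pm1\}$) then gives $p(C)=2^{-2N}\big|\sum_{z\in\{0,1\}^N}\tilde f(z)\big|^2$ on $N=\poly(n)$ qubits, where $\sum_z\tilde f(z)$ is, up to the global phase, the partition function of an Ising model with purely imaginary couplings.

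The remaining step — which I expect to be the main obstacle — is to show that the couplings $\mu_j,\nu_{j,k}$ can be chosen, as a polynomial-time function of an arbitrary $\cCP$-hard instance, so that deciding whether this imaginary-coupling Ising partition function is nonzero (equivalently, whether $p(C)=0$) is $\cCP$-hard. For this I would invoke the worst-case hardness of imaginary-coupling Ising partition functions: via the correspondence between IQP output amplitudes and such partition functions \cite{PhysRevLett.117.080501,IQP_Ising} together with the classical $\sP$-hardness of complex-parameter Ising and Tutte partition functions, exact evaluation of $\sum_z\tilde f(z)$ is $\sP$-hard, and one promotes this to the zero-testing version using the $\GapP/\cCP$ characterization exploited in \cite{FGHR}. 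The delicate point is exactly this promotion: it requires the $\sP$-hardness reduction to be arranged so that the partition function equals a $\GapP$ value up to an efficiently computable nonzero normalization, so that its vanishing encodes a $\CP$-complete condition (e.g. the coincidence of two $\sP$ counts). The rest — extracting $(-i)^{|z|}$ and re-running the argument of \thm{ccp_QAOA} — is routine.
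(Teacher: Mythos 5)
There is a genuine gap, and it sits exactly where you flagged it. Your plan restricts $\tilde f$ to functions that are \emph{already} of Ising form, $\tilde f(z)=\exp\bigl(i\sum_j\mu_j(-1)^{z_j}+i\sum_{j,k}\nu_{j,k}(-1)^{z_j}(-1)^{z_k}\bigr)$ up to a global phase, and then defers the entire content of the theorem to the claim that zero-testing of the resulting imaginary-coupling Ising partition function is $\cCP$-hard. That claim does not follow from what you invoke: $\sP$-hardness of \emph{exactly evaluating} a partition function says nothing about the hardness of deciding whether it \emph{vanishes}, and there is no generic ``promotion'' from one to the other. The known hardness results for imaginary-coupling Ising partition functions (\cite{IQP_Ising,PhysRevLett.117.080501}) are themselves \emph{derived} by compiling an arbitrary circuit into IQP form via the Hadamard gadget --- which is precisely the step you are trying to route around --- and indeed this paper states average-case additive hardness of such partition functions as a \emph{consequence} of its IQP result, not as an input. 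So your argument is either circular or incomplete as written.

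The paper's proof goes the other way and is constructive: keep $\tilde f:\{0,1\}^n\to\{-1,1\}$ an arbitrary efficiently computable $\cCP$-hard function, synthesize the phase oracle $U_{\tilde f}\ket z=\tilde f(z)\ket z$ over the gate set $\{H, CZ, S, T\}$, and observe that every gate except the Hadamards is already a $\sigma^z$ or $\sigma^z\sigma^z$ rotation. The Hadamards are then teleported onto ancilla qubits using the post-selected Hadamard gadget of \cite{Bremner2010ClassicalSO} as adapted to QAOA in \cite{farhi2019quantum,Dalzell2020howmanyqubitsare} (the two-qubit gate $Q$ in the gadget decomposes into $CZ$ and $S$, hence is Ising), so that on the enlarged register of $\poly(n)$ qubits the whole of $U_{\tilde f}$ becomes a genuine two-local Ising-form diagonal unitary, and hardness is inherited directly from the arbitrary instance. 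Your observation that $(-i)^{|z|}$ is generated by $\tfrac\pi4\sum_j\sigma^z_j$ up to a global phase is correct and matches the paper's use of $S^{\otimes n}$; that part is fine. But without the gadget compilation (or some other argument establishing $\cCP$-hardness of zero-testing within your restricted Ising family), the proof is not complete.
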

\begin{proof}
From the definition of $\cCP$, the function $\Tilde{f}(z):\{0,1\}^n\rightarrow \{-1,1\}$ in \thm{ccp_QAOA} is classically efficiently computable. Using standard techniques in quantum circuit synthesis \cite{nielsen2000quantum}, we can convert this function into a quantum circuit $U_{\tilde{f}}$ that does the following when acting on a basis state.
\begin{equation}
    U_{\tilde{f}}\ket{z}=\tilde{f}(z)\ket{z}\,.
\end{equation}
The circuit implementation of $U_{\tilde{f}}$ can be done using the universal gate set consisting of Hadamard, $CZ$, $S$ and $T$ gates. All these gates except the Hadamards are $Z$ or $ZZ$ rotations and fit the Ising form. To move the Hadamards to ancillas, we use the trick described in \cite{Bremner2010ClassicalSO} for IQP circuits and adapted to QAOA in \cite{farhi2019quantum} and \cite{Dalzell2020howmanyqubitsare}. We briefly describe it here. The gadget from \cite{Bremner2010ClassicalSO} essentially teleports the Hadamard gate from a data qubit onto an ancilla qubit using a $CZ$ gate to entangle them (and the ancilla qubit becomes the new data qubit). The gadget takes an ancilla in the $\ket{+}$ state and applies $CZ$ between them. Then, a Hadamard gate is applied to the data qubit and measured. If we post-select on the outcome $\ket{0}$, then one can see that the state of the ancilla is a Hadamard applied to state of the original data qubit. This is the original scheme for IQP circuits. The modification for QAOA from \cite{farhi2019quantum} and \cite{Dalzell2020howmanyqubitsare} is as follows.

Suppose the state of $n$ qubits is $\ket{\psi}=\ket{0,\psi_0} + \ket{1,\psi_1}$, where the first register is a single qubit and the second register contains the rest of the qubits. Suppose we need to apply a Hadamard gate on the first qubit. Define the single qubit gate $\Tilde{H}$ as follows. 
\begin{equation}
    \Tilde{H}= \exp(-i\frac{\pi}{4}\sigma^x)\,.
\end{equation}
Define the two-qubit operator $Q$ as
\begin{equation}
    Q=\begin{pmatrix}
    1&0&0&0\\
    0&i&0&0\\
    0&0&1&0\\
    0&0&0&-i
    \end{pmatrix}\,.
\end{equation}
Using these, the modified Hadamard gadget is given by the circuit in \fig{hadamard_gadget}. Implementing the circuit gives the state (the gates above the arrows act on the ancilla qubit and the first qubit)
\begin{equation}
    \ket{+,\psi}\substack{Q\\\xrightarrow{\hspace*{0.5cm}}}\, \frac{1}{\sqrt{2}}(\ket{0,0,\psi_0} + \ket{1, 0,\psi_0} + i\ket{0,1,\psi_1} -i\ket{1,1,\psi_1} \substack{\bra{0}(I\otimes\Tilde{H})\\\xrightarrow{\hspace*{1.2cm}}}\,\frac{1}{\sqrt{2}}(\ket{+,\psi_0} + \ket{-,\psi_1})
\end{equation}
The final state can be seen to be $H\ket{\psi}$ where the ancilla qubit becomes the first data qubit (the first data qubit is measured and post-selected on outcome $\ket{0}$).
\begin{figure}[t]
 \centering
     \begin{quantikz}[column sep=1cm,row sep=0.5cm]
     \lstick{$\ket{+}$} & \ctrl{1} & \gate{S} & \qw \rstick{$H\ket{\psi}$}\\
	\lstick{$\ket{\psi}$} & \control{} & \gate{\Tilde{H}} & \qw \rstick{$\bra{0}$}
		\end{quantikz}
    \caption{The modified Hadamard gadget from \cite{farhi2019quantum} acting on the first qubit of $\ket{\psi}$ and the ancilla. Here the operator $Q$, which is a diagonal unitary $diag(1,i,1,-i)$, is split into a $CZ$ and the phase gate.}
    \label{fig:hadamard_gadget}
\end{figure}
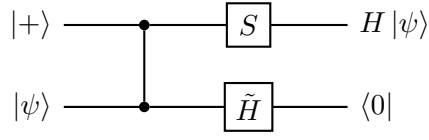
The two qubit gate $Q$ can be implemented as a product of gates $CZ$ and $S=\exp(i\frac{\pi}{4}\sigma^z)$, where $\sigma^z$ acts on the ancilla qubit. To show that $f(z)$ can be expressed as an Ising-like interaction, we need to write $i^{|z|}$ in the same way. This is accomplished by implementing $S^{\otimes n}$ which puts a phase of $i$ if the state is in $\ket{1}$ and $1$ if the state is $\ket{0}$.
\begin{align}
    f(z)=\exp(i\frac{\pi}{4} (\sigma^z_1+\dots +\sigma^z_n))\Tilde{f}(z)\,,
\end{align}
where $\Tilde{f}(z)$ is of the form
\begin{equation}
    \Tilde{f}(z)=\exp(i\sum_{j,k}a_{j,k}\sigma^z_j\sigma^z_k)\,,
\end{equation}
for some coefficients $a_{j,k}$. Putting this together, we see that $f(z)$ can be written as 
\begin{equation}
    f(z)=\exp(i\sum_j b_j\sigma^z_j + i\sum_{j,k}c_{j,k}\sigma^z_j\sigma^z_k)\, \text{ with } b_j, c_{j,k}\in \mathbb{R}\,.
\end{equation}
\end{proof}

\section{Haar random circuits}\label{sec:haar}
In this section, we discuss the hardness of Haar random circuits considered in \cite{Bouland2018quantum, movassagh2018efficient, bouland2021noise, movassagh2019cayley, KMM21}. We recall the definition of Haar random circuit family first.
\begin{definition}\label{defn:distribution_haar}
Given an architecture of $n$ qubits and an arbitrary circuit $C_0$ consisting of single or two qubit gates $G_1,\dots G_m$, a random circuit is defined as replacing each gate $G_i$ by $H_iG_i$, where the gate $H_i$ is drawn from the Haar measure over the unitary group of the same dimension as $G_i$.
\end{definition}
Next we define the interpolation we use for the main result which is similar to the one from \cite{Bouland2018quantum}, but here we use a scale factor of $1/m$ and do not take Taylor series at the level of gates.
\begin{definition}\label{defn:interpolation_haar}
For an architecture, let $G_i$ be some gate sequence. Then for some constant $\theta\in[0,1]$, define the $\theta$ perturbed distribution $\mathcal{H}_{\theta}$ as follows. For each $G_j$, replace it by 
\begin{equation}
    G_j\rightarrow H_j\exp(-\frac{\theta}{m}\log H_j)G_j\,,
\end{equation}
where $H_j$ is a Haar random gate.
\end{definition}
Denote the resulting circuit $C(\theta)$. We show next that the output probability can be written in a specific form as in the case of QAOA.
\begin{lemma}\label{lem:p_theta_haar}
The probability $p(\theta)$ of obtaining the outcome $0^n$ for the interpolated circuit can be written as
\begin{equation}
    p(\theta)=\sum_r e^{-i(\theta/m)\Delta\phi_r}A_r\,,
\end{equation}
where $r$ runs over a set of size $N^{2m}$ and
\begin{equation}
    |A_r|\leq 1 \hspace{0.5in}\text{ and }\hspace{0.5in} \frac{|\Delta\phi_r|}{m}=O(1)\,.
\end{equation}
\end{lemma}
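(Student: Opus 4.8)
Proof proposal for \lem{p_theta_haar}.

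The plan is to follow exactly the route of \lem{p_theta_QAOA}: expand the interpolated amplitude over a finite index set so that all of the $\theta$-dependence is carried by pure phase factors, and then take the modulus squared. The only new ingredient compared to QAOA is that the interpolation angles are now the eigenphases of the Haar gates rather than prescribed real numbers, so the argument has to start with a diagonalization.

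First I would rewrite the perturbed gate of \defn{interpolation_haar}. Since $H_j$ commutes with $\log H_j$, the replacement $G_j \mapsto H_j\exp(-\tfrac{\theta}{m}\log H_j)G_j$ is the same as $\exp\!\big((1-\tfrac{\theta}{m})\log H_j\big)G_j$. Diagonalizing $H_j = \sum_{k} e^{i\lambda_{j,k}}\,\outerprod{w_{j,k}}{w_{j,k}}$ with the principal branch gives $\lambda_{j,k}\in(-\pi,\pi]$; with probability $1$ a Haar-random gate has no eigenvalue $-1$, so the branch, and hence the matrix logarithm used in the interpolation, is well defined (the measure-zero bad set is harmless for the reduction). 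The perturbed gate then becomes $\big(\sum_{k} e^{i(1-\theta/m)\lambda_{j,k}}\,\outerprod{w_{j,k}}{w_{j,k}}\big)G_j$, where each $\outerprod{w_{j,k}}{w_{j,k}}$ is understood to act as the identity outside the $1$- or $2$-qubit support of gate $j$.

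Substituting into $C(\theta) = (H_m^{\,1-\theta/m}G_m)\cdots(H_1^{\,1-\theta/m}G_1)$ and choosing an eigenindex $k_j$ for each gate gives
\begin{equation}
    C(\theta) = \sum_{k_1,\dots,k_m}\Big(\prod_{j=1}^m e^{i(1-\theta/m)\lambda_{j,k_j}}\Big)\, M_{k_1,\dots,k_m}\,,
\end{equation}
where $M_{k_1,\dots,k_m} = \outerprod{w_{m,k_m}}{w_{m,k_m}}\,G_m\cdots \outerprod{w_{1,k_1}}{w_{1,k_1}}\,G_1$ is independent of $\theta$ and, being a product of orthogonal projectors and unitaries, has operator norm at most $1$. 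Hence $a(\theta) := \bra{0^n}C(\theta)\ket{0^n} = \sum_{k_1,\dots,k_m} c_{k_1,\dots,k_m}\prod_j e^{i(1-\theta/m)\lambda_{j,k_j}}$ with $c_{k_1,\dots,k_m} := \bra{0^n}M_{k_1,\dots,k_m}\ket{0^n}$ and $|c_{k_1,\dots,k_m}|\le 1$. Taking $p(\theta)=|a(\theta)|^2$ and setting $r=(k_1,\dots,k_m,\ell_1,\dots,\ell_m)$, $\Delta\phi_r := \sum_j\lambda_{j,k_j}-\sum_j\lambda_{j,\ell_j}$, and $A_r := c_{k_1,\dots,k_m}\,\overline{c_{\ell_1,\dots,\ell_m}}\,e^{i\Delta\phi_r}$ yields $p(\theta) = \sum_r e^{-i(\theta/m)\Delta\phi_r}A_r$. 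The stated bounds then drop out: $|A_r| = |c_{k_1,\dots,k_m}|\,|c_{\ell_1,\dots,\ell_m}|\le 1$; each $|\lambda_{j,k}|\le\pi$, so $|\Delta\phi_r|\le 2\pi m$ and $|\Delta\phi_r|/m\le 2\pi = O(1)$; and each of the two index tuples ranges over at most $N^m$ values ($N$ being the largest gate dimension), so $r$ ranges over a set of size at most $N^{2m}$, which we may pad with vanishing coefficients to make exactly $N^{2m}$.

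I do not expect a genuine obstacle here; the computation is mechanical. The only points needing care are the well-definedness of $\log H_j$ (handled by excluding the almost-sure bad event that $-1$ is an eigenvalue) and the bookkeeping that each $\outerprod{w_{j,k}}{w_{j,k}}$ is supported only on gate $j$, which is exactly what makes $\|M_{k_1,\dots,k_m}\|\le 1$ and the count of $N^{2m}$ tuples immediate.
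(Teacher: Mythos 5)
Your proposal is correct and follows essentially the same route as the paper's proof: diagonalize each $H_j$, absorb the $\theta$-independent phase into $A_r$, bound $|A_r|\leq 1$ because the coefficient operators are products of projectors and unitaries, and bound $|\Delta\phi_r|/m$ by the $O(1)$ size of each eigenphase. Your additional remark on the well-definedness of $\log H_j$ (excluding the measure-zero event of an eigenvalue $-1$) is a small point of care the paper leaves implicit.
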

\begin{proof}
Recall that for the circuit $C(\theta)$, each $G_j(\theta)$ is of the form
\begin{equation}
    G_j(\theta)=\exp((1-\theta/m)\log H_j)G_j\,.
\end{equation}
We assume that these single or two-qubit gates are tensored with identity outside their support. For simplicity, we will assume that they are all two qubit gates (possibly acting as the identity on one of the qubits) with dimension denoted $N$. Let $V_j$ be the unitary that diagonalizes $H_j$ and let $\exp(i\phi_{k_j})$ and $\ket{\psi_{k_j}}$ be its eigenvalues and eigenvectors. We can write $G_j(\theta)$ as
\begin{equation}
    G_j(\theta)=V_j^\dag\sum_{k_j=1}^N  e^{i(1-\theta/m)\phi_{k_j}}\ket{\psi_{k_j}}\bra{\psi_{k_j}}V_jG_j=\sum_{k_j}e^{i(1-\theta/m)\phi_{k_j}}\tilde{G}_{k_j}\,,
\end{equation}
where we introduce the notation
\begin{equation}\label{eq:G_tilde}
    \tilde{G}_{k_j} = V_j^\dag \ket{\psi_{k_j}}\bra{\psi_{k_j}}V_jG_j\,.
\end{equation}
We now consider the amplitude $\bra{0}C(\theta)\ket{0}$.
\begin{equation}
    \bra{0}\prod_{j=1}^m G_j(\theta)\ket{0} = \sum_{k_1,\dots ,k_m}e^{i(1-\theta/m)\sum_j\phi_{k_j}} \bra{0}\tilde{G}_{k_1}\tilde{G}_{k_2}\dots \tilde{G}_{k_m}\ket{0}\,.
\end{equation}
The probability can be written as
\begin{equation}
    p(\theta) = \sum_{k,k'}e^{i(1-\theta/m)\Delta\phi_{k,k'}} \bra{0}\tilde{G}_k\ket{0}\bra{0}\tilde{G}_{k'}^\dag\ket{0}\,,
\end{equation}
where $k$ represents the tuple $(k_1,\dots ,k_m)$, similarly $k'$ is $(k'_1,\dots ,k'_m)$, $\tilde{G}_k$ is the product
\begin{equation}\label{eq:G_tilde_k}
    \tilde{G}_k=\tilde{G}_{k_1}\dots \tilde{G}_{k_m}\,,
\end{equation}
and similarly $\tilde{G}_{k'}$ and
\begin{equation}
    \Delta\phi_{k,k'}=(\phi_{k_1}+\dots +\phi_{k_m})-(\phi_{k'_{1}}+\dots+ \phi_{k'_{m}})\,.
\end{equation}
Writing $r=(k,k')$, we see that it runs over a set of size $N^{2m}$. Define
\begin{equation}
    A_r=e^{i\Delta\phi_r}\bra{0}\tilde{G}_k\ket{0}\bra{0}\tilde{G}_{k'}^\dag\ket{0}\,.
\end{equation}
From \eq{G_tilde}, since each $\tilde{G}_k$ is a product of unitaries and projectors, we have
\begin{equation}
    |A_r|\leq 1\,.
\end{equation}
Since each $|\phi_{k_i}|=O(1)$, we have 
\begin{equation}
    \frac{|\Delta\phi_r|}{m}=O(1)\,.
\end{equation}
This proves the lemma.
\end{proof}

The worst case $\cCP$ hardness of additive approximations of arbitrary quantum circuits with a polynomial number of gates is shown in several places \cite{TD04, bouland2021noise, KMM21, FGHR}. We state the result from \cite[Lemma 10]{bouland2021noise}.
\begin{lemma}\label{lem:hardness_haar}
There exist quantum circuits $C$ on $n$ qubits for every $n$ large enough such that it is $\cCP$ hard to decide if $p(C)=0$ or if $p(C)\geq 1/2^{2n}$.
\end{lemma}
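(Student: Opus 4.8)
The plan is to reuse the structure of the proof of \thm{ccp_QAOA}, which is in fact simpler here since arbitrary one- and two-qubit gates are allowed and nothing has to be forced into the Ising/QAOA form. Fix an arbitrary $\cCP$ hard Boolean function $\tilde f:\{0,1\}^n\to\{-1,1\}$, so that by definition of the class, deciding whether $\sum_{x\in\{0,1\}^n}\tilde f(x)\neq 0$ is $\cCP$ hard. I would construct a circuit $C$ over single and two qubit gates whose amplitude to output $0^n$ on input $0^n$ equals exactly $2^{-n}\sum_x\tilde f(x)$, and then read off the gap.

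First I would synthesize a diagonal unitary implementing the phases $\tilde f(z)$. Since $\tilde f$ is classically efficiently computable, standard reversible and quantum circuit synthesis (as in \cite{nielsen2000quantum}) yields a $\poly(n)$-size circuit over Hadamard, $CZ$, $S$, $T$ and Toffoli gates (the Toffoli decomposed into one- and two-qubit gates) that, using $\poly(n)$ ancillas initialized to $\ket 0$, computes a bit $b(z)$ with $(-1)^{b(z)}=\tilde f(z)$ into an ancilla, applies a $Z$ to that ancilla, and then uncomputes, so that the net effect on the $n$ data qubits is $U_{\tilde f}\ket z=\tilde f(z)\ket z$ with every ancilla returned to $\ket 0$.

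Next I would set $C=(H^{\otimes n}\otimes I)\,U_{\tilde f}\,(H^{\otimes n}\otimes I)$ and compute $p(C)$. Acting on $\ket{0^n}\ket{0}_{\mathrm{anc}}$, the first Hadamard layer produces $2^{-n/2}\sum_z\ket z$, then $U_{\tilde f}$ gives $2^{-n/2}\sum_z\tilde f(z)\ket z$, and the second Hadamard layer yields the $0^n$-amplitude $\bra{0^n}H^{\otimes n}U_{\tilde f}H^{\otimes n}\ket{0^n}=2^{-n}\sum_z\tilde f(z)$, with the ancillas back in $\ket 0$. Hence $p(C)=2^{-2n}\bigl|\sum_z\tilde f(z)\bigr|^2$. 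Since $\sum_z\tilde f(z)\in\Z$, it is either $0$ or of magnitude at least $1$, so $p(C)=0$ when $\sum_z\tilde f(z)=0$ and $p(C)\geq 2^{-2n}$ otherwise; an algorithm deciding $p(C)=0$ versus $p(C)\geq 2^{-2n}$ thus decides a $\cCP$ hard problem. Taking the qubit count in the statement to be the total number of data plus ancilla qubits (or equivalently invoking the circuit construction of \cite{FGHR} directly) gives the stated form.

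The step requiring the most care — more a bookkeeping point than a genuine obstacle — is the synthesis: one must make the evaluation of $\tilde f$ reversible and uncompute it cleanly, so that $C$ is honestly a product of one- and two-qubit gates acting on the all-zeros input with every ancilla restored, and so that $p(C)$ is \emph{exactly} $2^{-2n}|\sum_z\tilde f(z)|^2$ rather than only approximately so. This exactness is precisely what makes the additive-error bound $1/2^{2n}$ coincide with the zero-versus-nonzero gap and hence yields $\cCP$ hardness of the promise problem.
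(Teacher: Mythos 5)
Your proposal is correct and is essentially the standard construction: the paper does not prove this lemma itself but imports it verbatim from \cite[Lemma 10]{bouland2021noise} (and \cite{FGHR}), and your argument is the same Hadamard--phase-oracle--Hadamard reduction that underlies that cited result and that the paper itself carries out explicitly for the QAOA and IQP cases in \thm{ccp_QAOA} and \thm{ccp_IQP}. Your attention to clean uncomputation of the ancillas (so that the $0^n$-amplitude is exactly $2^{-n}\sum_z\tilde f(z)$) and to counting ancillas in the qubit total is exactly the right bookkeeping.
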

In \cite{Bouland2018quantum, movassagh2018efficient}, it was also shown that for Haar random quantum circuits, it is enough to focus on the probability of obtaining $0^n$ where a hiding theorem like the one above for QAOA is shown. We will use these results in \sec{main}. 

\section{Random IQP circuits}\label{sec:IQP}
In this section, we will prove some results for IQP circuits, which will be useful in \sec{main}, where we show our main hardness result for these circuit families. First, we define the random distribution over IQP circuits and the interpolation used here.

\begin{definition}\label{defn:distribution_IQP}
Given an IQP circuit $C$ of the form $H^{\otimes n}C_ZH^{\otimes n}$, where $C_Z$ is a sequence of single or two qubit gates in the $Z$ basis i.e., $C_Z=\prod_i C_i$ we define a random IQP circuit by appending each gate by a random gate of the following form.
\begin{equation}
    \tilde{C}_i=\sum_{z}\exp(i\phi^{(i)}_z)\ket{z}\bra{z}\,,
\end{equation}
where $\ket{z}$ is the $Z$ basis for the gate and $\phi^{(i)}_z$ is a random phase in $[0,2\pi)$. The random IQP circuit can be written as $H^{\otimes n}\prod_i \tilde{C}_iC_iH^{\otimes n}$.
\end{definition}

\begin{definition}
Given an IQP circuit $C$ consisting of $m$ gates, we define the interpolated circuit $C(\theta)$ between $C$ and a random IQP circuit $\tilde{C}$ by creating an interpolation $C_i(\theta)$ for each gate as follows.
\begin{equation}
    C_i(\theta) = \sum_z \exp(ih^{(i)}_z + (1-\theta/m)\phi^{(i)}_z)\ket{z}\bra{z}\,,
\end{equation}
where $h^{(i)}_z$ is the eigenvalue of $C_i$ corresponding to the eigenstate $\ket{z}$ and $\phi^{(i)}_z$ is the eigenvalue of $\tilde{C}_i$.
\end{definition}

We now show the worst-case hardness result.
\begin{theorem}\label{thm:ccp_IQP}
    There exists a family of IQP circuits $C_n$ for each $n$ large enough, such that it is $\cCP$ hard to approximate the probability $|\bra{0}C_n\ket{0}|^2$ to within an additive error of $1/2^{2n+1}$.
\end{theorem}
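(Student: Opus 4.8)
The plan is to mirror the proof of \thm{ccp_QAOA}, using the fact that for an IQP circuit $C=H^{\otimes n}C_ZH^{\otimes n}$ the amplitude of interest reduces to a Fourier-type sum. Since $H^{\otimes n}\ket{0^n}=\ket{+^n}$ and $\bra{0^n}H^{\otimes n}=\bra{+^n}$, if $C_Z$ is the diagonal unitary with $C_Z\ket{z}=\tilde f(z)\ket{z}$ for a function $\tilde f:\{0,1\}^n\to\{-1,1\}$, then $\bra{0^n}C\ket{0^n}=\bra{+^n}C_Z\ket{+^n}=\frac{1}{2^n}\sum_{z}\tilde f(z)$, and hence $|\bra{0^n}C\ket{0^n}|^2=2^{-2n}\big|\sum_z\tilde f(z)\big|^2$, which is $0$ iff $\sum_z\tilde f(z)=0$ and is at least $2^{-2n}$ otherwise (the sum being an integer). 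So, exactly as in the QAOA case, it suffices to realize $C_Z$ for a $\cCP$-hard $\tilde f$ as a legal IQP circuit; note that here, unlike \thm{ccp_QAOA}, no $\beta=\pi/4$ mixing layer and no $(-i)^{|z|}$ twist of $\tilde f$ are needed, since the two Hadamard layers already perform the relevant basis change.

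The work is therefore to produce such a $C_Z$ in IQP form, i.e., to realize the diagonal $\pm1$ unitary $U_{\tilde f}$ using only single- and two-qubit $Z$-diagonal gates conjugated by $H^{\otimes N}$. Since a $\cCP$-hard $\tilde f$ is by definition efficiently classically computable, standard circuit synthesis (\cite{nielsen2000quantum}) gives a $\poly(n)$-size circuit for $U_{\tilde f}$ over the universal set $\{H,CZ,S,T\}$; the gates $CZ$, $S$, $T$ are already single- or two-qubit $Z$-diagonal gates, and the Hadamards are eliminated exactly as in \thm{Ising} via the teleportation gadget of \cite{Bremner2010ClassicalSO} (adapted to IQP in \cite{farhi2019quantum}): each bulk $H$ is replaced by a $CZ$ between its data qubit and a fresh $\ket{+}$ ancilla, followed by a computational-basis measurement of the data qubit post-selected on $\ket{0}$, with the ancilla taking over as the data qubit. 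After this substitution every bulk gate is a single- or two-qubit $Z$-diagonal gate; the initial $\ket{+}$ states of all $N=n+\poly(n)$ qubits are produced by one layer $H^{\otimes N}$, the closing Hadamard-and-measure steps of the gadgets fold together with the logical $H^{\otimes n}$ into one layer $H^{\otimes N}$, and measuring all $N$ qubits and conditioning on the outcome $0^N$ enforces all the gadget post-selections simultaneously. Thus $C=H^{\otimes N}C_ZH^{\otimes N}$ is a bona fide IQP circuit on $N$ qubits.

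Finally one checks the normalization. Writing $C_Z\ket{x}=g(x)\ket{x}$ with $g(x)\in\{-1,1\}$ for $x\in\{0,1\}^N$, we have $\bra{0^N}C\ket{0^N}=\frac{1}{2^N}\sum_{x}g(x)$, and by correctness of the gadget (the all-zeros outcome implements the post-selections, each contributing a fixed $\tfrac{1}{\sqrt2}$ factor) this equals $2^{a/2}\,2^{-N}\sum_{z}\tilde f(z)$, where $a=N-n$ is the number of ancillas; in particular $|\bra{0^N}C\ket{0^N}|^2=2^{a}\,2^{-2N}\big|\sum_z\tilde f(z)\big|^2$ is a nonnegative-integer multiple of $2^{-2N}$, equal to $0$ when $\sum_z\tilde f(z)=0$ and at least $2^{-2N}$ otherwise. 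Hence a classical algorithm estimating this probability to additive error $2^{-2N-1}$ decides whether $\sum_z\tilde f(z)\neq0$, a $\cCP$-hard problem; relabeling $N$ as $n$ yields the claimed error bound $1/2^{2n+1}$.

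The main obstacle is the bookkeeping in the last two paragraphs: confirming that all Hadamards can be collected into the two outer $H^{\otimes N}$ layers so that the construction is genuinely an IQP circuit in the sense of \defn{distribution_IQP}, that the $\ket{0}$ post-selections demanded by the gadgets are exactly the projections effected by conditioning on the $0^N$ measurement, and that the resulting amplitude has the clean form (integer)$\,\cdot\,2^{-N}$ that makes the promise gap exactly $1/2^{2n}$. Everything else is a direct transcription of \thm{ccp_QAOA} and \thm{Ising} to the IQP setting.
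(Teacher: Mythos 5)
Your proposal is correct and follows essentially the same route as the paper: reduce to deciding whether $\sum_z \tilde f(z)\neq 0$ via the amplitude $\bra{+^n}U_{\tilde f}\ket{+^n}=2^{-n}\sum_z\tilde f(z)$, synthesize $U_{\tilde f}$ over $\{H,CZ,S,T\}$, and remove the internal Hadamards with the gadget of \cite{Bremner2010ClassicalSO}. The only difference is that you carry out the ancilla bookkeeping (the $2^{-a/2}$ post-selection factors and the resulting gap $\geq 2^{-2N}$ on the enlarged circuit) explicitly, which the paper leaves implicit; this is a welcome refinement rather than a different argument.
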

\begin{proof}
The proof is along the same lines as the one for QAOA (\thm{ccp_QAOA}). We start by considering a $\cCP$ hard function 
\begin{equation}
    f:\{0,1\}^n\rightarrow \{-1,1\}\,.
\end{equation}
Using standard techniques in quantum computing \cite{nielsen2000quantum}, we can then construct a quantum circuit that computes $f(z)$ as a phase i.e.,
\begin{equation}
    U_f=\sum_z f(z)\ket{z}\bra{z}\,.
\end{equation}
Now by a similar calculation as in \thm{ccp_QAOA}
\begin{equation}
    p(C)=|\bra{0^n}H^{\otimes n}U_fH^{\otimes n}\ket{0^n}|^2=\frac{1}{2^{2n}}\Big(\sum_z f(z)\Big)^2\,.
\end{equation}
Since deciding if $\sum_zf(z)$ is positive or zero is $\cCP$ hard, we find that deciding if the probability $p(C)$ is greater than $1/2^{2n}$ or zero is also $\cCP$ hard. The form of the circuit above is not quite in the IQP form since $U_f$ need not consist of only $Z$ diagonal gates. It could contain Hadamard gates as well. We explain below how to make it $Z$ diagonal.

The circuit to implement $U_f$ can be assumed to consist of the standard gate set containing Hadamards, CNOTs, phase and $T$ gates. All gates except the Hadamards are $Z$ diagonal. Now we can use a Hadamard gadget described in \cite{Bremner2010ClassicalSO} to move the Hadamards to ancillas. Recall that the gadget takes an ancilla in the $\ket{+}$ state and applies $CZ$ between them. Then, a Hadamard gate is applied to the data qubit, which is then measured. If we now post-select on the outcome $\ket{0}$, then the state of the ancilla is a Hadamard gate applied to state of the original data qubit. This brings the circuit into the IQP form.
\end{proof}
We need the hiding lemma next.
\begin{lemma}\label{lem:hiding_IQP}
Given a random IQP circuit, define
\begin{equation}
    p_z(C)=|\bra{z}C\ket{0^n}|^2\,,
\end{equation}
where $z$ is an arbitrary bit string in the $Z$ basis. Let $p(C)$ denote the probability of getting the outcome $0^n$. Then we have
\begin{equation}
    \pr_C[p_z(C)]=\pr_C[p(C)]\,,
\end{equation}
where the probability is over the randomness in the circuit distribution.
\end{lemma}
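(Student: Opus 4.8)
The plan is to follow the QAOA hiding argument (\lem{hiding_QAOA}), the key simplification being that here the operator which relabels the outcome is $Z$-diagonal and therefore commutes through the entire body of an IQP circuit, so no extra structural assumption about the $X$ layer is needed.

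First I would turn the amplitude for outcome $z$ into an amplitude for outcome $0^n$. Writing $Z^z=\prod_{k\,:\,z_k=1}\sigma^z_k$, one has $H^{\otimes n}\ket{z}=Z^z H^{\otimes n}\ket{0^n}$ (both sides equal $2^{-n/2}\sum_y(-1)^{z\cdot y}\ket{y}$), and since $H$ and $Z^z$ are Hermitian this gives $\bra{z}H^{\otimes n}=\bra{0^n}H^{\otimes n}Z^z$. Hence
\[
 p_z(C)=\Big|\bra{0^n}H^{\otimes n}\,Z^z\Big(\textstyle\prod_i\tilde C_iC_i\Big)H^{\otimes n}\ket{0^n}\Big|^2 .
\]

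Next I would absorb $Z^z$ into the random gates. Every $C_i$ and $\tilde C_i$ is $Z$-diagonal, as is $Z^z$, and all of these operators commute; choosing for each qubit $k$ with $z_k=1$ a random gate $\tilde C_{i(k)}$ whose support contains $k$, I can regroup $Z^z\prod_i\tilde C_iC_i=\prod_i\tilde C_i'C_i$, where $\tilde C_i'=\big(\prod_{k\in T_i}\sigma^z_k\big)\tilde C_i$ and the sets $T_i\subseteq\operatorname{supp}(\tilde C_i)$ partition $\{k:z_k=1\}$. Each $\tilde C_i'$ is a $Z$-diagonal gate on the same support whose phases are $\phi^{(i)}_w\mapsto\phi^{(i)}_w+\pi\sum_{k\in T_i}w_k$; a shift of a uniform variable on the circle is again uniform, so the collection $\{\tilde C_i'\}$ has exactly the same law as $\{\tilde C_i\}$. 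Therefore $C_z:=H^{\otimes n}\big(\prod_i\tilde C_i'C_i\big)H^{\otimes n}$ is a random IQP circuit drawn from the same distribution as $C$, and $p_z(C)=|\bra{0^n}C_z\ket{0^n}|^2=p(C_z)$ (an overall global phase, if present, drops out of the modulus). Comparing distributions then yields $\pr_C[p_z(C)]=\pr_C[p(C)]$.

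The only step that needs care is the existence of the partition, i.e.\ that every qubit $k$ with $z_k=1$ lies in the support of at least one random gate. This is immediate once $C_Z$ acts nontrivially on every qubit, since the construction appends a random gate to each gate of $C_Z$; and one can always arrange this by first padding $C_Z$ with a single-qubit $Z$-rotation on every qubit (which keeps the circuit in IQP form), the exact analogue of appending $\exp(iH')$ in the QAOA proof. In that case the factor $\sigma^z_k$ is simply absorbed into the corresponding single-qubit random gate $\operatorname{diag}(e^{i\phi_0},e^{i\phi_1})$, turning it into $\operatorname{diag}(e^{i\phi_0},e^{i(\phi_1+\pi)})$, which is again uniformly distributed. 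I expect this bookkeeping, rather than any conceptual difficulty, to be the main thing to get right.
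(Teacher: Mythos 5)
Your proof is correct and follows essentially the same route as the paper's: rewrite the outcome-$z$ amplitude as an outcome-$0^n$ amplitude by commuting $X$ gates (equivalently $Z^z$) through the Hadamard layers, then absorb the resulting $Z$-diagonal factor into the uniformly random phases, whose distribution is invariant under such shifts. The one place you are more careful than the paper is in verifying that every qubit with $z_k=1$ lies in the support of some random gate (and padding $C_Z$ if not); the paper's proof leaves this implicit in the phrase ``the distribution is invariant under left multiplication,'' so your extra bookkeeping is a welcome tightening rather than a deviation.
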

\begin{proof}
Given a bit string $z$ and a random circuit $C$, define a new circuit $C_z$ such that 
\begin{equation}
    C_z=X^{\otimes I}C\,,
\end{equation}
where $X^{\otimes I}$ is a set of $X$ gates on the set $I$ of qubits where $z$ has a one. Since $C$ contains Hadamards on either side, we can commute $X^{\otimes I}$ through the Hadamards to get $Z^{\otimes I}$. This is then absorbed by the random $Z$ basis gates since the distribution is invariant under left multiplication. Therefore, we have
\begin{equation}
    \pr_C[p_z(C)]=\pr_C[p(C_z)] = \pr_C[p(C)]\,.
\end{equation}
\end{proof}

Next we prove that $p(\theta)$ can be written in a specific form that will be useful later.
\begin{lemma}\label{lem:p_theta_IQP}
The probability $p(\theta)$ of an IQP circuit can be written as 
\begin{equation}
    p(\theta)=\sum_r e^{-i(\theta/m)\Delta\phi_r}A_r\,,
\end{equation}
where
\begin{equation}
    |A_r|= 1/2^{2n} \hspace{0.5in} \text{ and }\hspace{0.5in} \frac{|\Delta\phi_r|}{m}=O(1)\,,
\end{equation}
and $r$ runs over a set of size $2^{2n}$.
\end{lemma}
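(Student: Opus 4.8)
The plan is to mirror the structure of \lem{p_theta_QAOA} and \lem{p_theta_haar}, exploiting the fact that the IQP interpolation keeps every gate diagonal in the $Z$ basis, which makes this the simplest of the three cases. First I would observe that since each $C_i(\theta)$ is diagonal in the computational basis, the product $D(\theta):=\prod_i C_i(\theta)$ is itself a diagonal unitary on all $n$ qubits. Writing $S_i$ for the (one- or two-qubit) support of the $i$-th gate and $z|_{S_i}$ for the restriction of an $n$-bit string $z$ to $S_i$, one has $D(\theta)=\sum_{z\in\{0,1\}^n}\exp\big(ih_z+i(1-\theta/m)\phi_z\big)\ket{z}\bra{z}$ with $h_z=\sum_{i=1}^m h^{(i)}_{z|_{S_i}}$ and $\phi_z=\sum_{i=1}^m \phi^{(i)}_{z|_{S_i}}$.

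Next I would conjugate by the Hadamard layers, which sends $\ket{0^n}$ to $\ket{+^n}$, so that the amplitude becomes $\bra{0^n}H^{\otimes n}D(\theta)H^{\otimes n}\ket{0^n}=\bra{+^n}D(\theta)\ket{+^n}=2^{-n}\sum_z\exp(ih_z+i(1-\theta/m)\phi_z)$. Squaring the modulus gives $p(\theta)=2^{-2n}\sum_{z,z'}\exp\big(i(h_z-h_{z'})+i(1-\theta/m)(\phi_z-\phi_{z'})\big)$. Setting $r=(z,z')$, which ranges over a set of size $2^n\cdot 2^n=2^{2n}$, and defining $\Delta h_r=h_z-h_{z'}$, $\Delta\phi_r=\phi_z-\phi_{z'}$, and $A_r=2^{-2n}\exp(i\Delta h_r+i\Delta\phi_r)$, I get exactly the claimed form $p(\theta)=\sum_r e^{-i(\theta/m)\Delta\phi_r}A_r$ with $|A_r|=1/2^{2n}$.

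Finally, for the bound on $|\Delta\phi_r|/m$, I would use that each phase $\phi^{(i)}_z$ lies in $[0,2\pi)$, so $\phi_z\in[0,2\pi m]$ for every $z$, whence $|\Delta\phi_r|=|\phi_z-\phi_{z'}|\le 2\pi m$ and $|\Delta\phi_r|/m\le 2\pi=O(1)$. I do not expect a real obstacle here; the only points that need care are the bookkeeping of gate supports in the definitions of $h_z$ and $\phi_z$, and the observation that diagonality of the interpolated gates is precisely what collapses the product into a single diagonal unitary — which is also why the sum runs over only $2^{2n}$ terms here, rather than the $2^{4n}$ of the QAOA case or the $N^{2m}$ of the Haar case.
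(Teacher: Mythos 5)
Your proposal is correct and follows essentially the same route as the paper: both collapse the diagonal interpolated gates into a single diagonal unitary, expand the squared amplitude over pairs $r=(z,z')$, absorb the $\theta$-independent phases and the Hadamard matrix elements $\bra{0}H^{\otimes n}\ket{z}=2^{-n/2}$ into $A_r$ to get $|A_r|=2^{-2n}$, and bound $|\Delta\phi_r|\le 2\pi m$ since each of the $m$ gates contributes a phase in $[0,2\pi)$. Your bookkeeping is in fact slightly cleaner than the paper's (which contains a typo giving the index-set size as $2^{2m}$ in the proof body, where your $2^{2n}$ matches the lemma statement).
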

\begin{proof}
Let us write the expression for $p(\theta)$ as follows.
\begin{equation}
    p(\theta)=|\sum_z\bra{0}H^{\otimes n}e^{ih_z+i(1-\theta)\phi_z}\ket{z}\bra{z}H^{\otimes n}\ket{0}|^2\,,
\end{equation}
where $h_z$ and $\phi_z$ are the phases that correspond to the basis vector $\ket{z}$ from all the gates of the worst case circuit and the random circuit respectively. The probability can be re-written as
\begin{equation}
    p(\theta)=\sum_{r=(z,z')}e^{i\Delta h_r+i(1-\theta)\Delta\phi_r}\bra{0}H^{\otimes n}\ket{z}\bra{z}H^{\otimes n}\ket{0}\bra{0}H^{\otimes n}\ket{z'}\bra{z'}H^{\otimes n}\ket{0}\,,
\end{equation}
where $r=(z,z')$ runs over a set of size $2^{2m}$ and
\begin{equation}
    \Delta\phi_{z,z'} = (\phi_z - \phi_{z'})\,\text{ and }\,\Delta h_{z,z'}=(h_z-h_{z'})\,.
\end{equation}
Taking 
\begin{equation}
    A_r=e^{i\Delta h_r+i\Delta\phi_r}\bra{0}H^{\otimes n}\ket{z}\bra{z}H^{\otimes n}\ket{0}\bra{0}H^{\otimes n}\ket{z'}\bra{z'}H^{\otimes n}\ket{0}\,,
\end{equation}
we have $|A_r|= 1/2^{2n}$ and since there are $m$ gates and the contribution to the phase from each gate is $O(1)$, we also have
\begin{equation}
    \frac{|\Delta\phi_r|}{m}=O(1)\,.
\end{equation}
\end{proof}

\section{Main result}\label{sec:main}
In this section, we prove $\cCP$ hardness of the three random quantum circuit families considered above, namely $p=1$ QAOA, Haar random and IQP. Let us denote this set $\mathcal{F}$. First, we show that there exists a low degree polynomial approximating the interpolated probability $p(\theta)$.
\begin{theorem}\label{thm:polynomial}
For each of the families in $\mathcal{F}$ above and for the interpolation defined for each family, suppose that there is an interpolated circuit with $m$ gates with outcome probability $p(\theta)$. Then there exists a polynomial $\tilde{p}(\theta)$ of degree $d$, where
\begin{equation}\label{eq:d_choice}
    d=O(\frac{m}{\log m})\,,
\end{equation}
such that $\tilde{p}(m)=p(m)$ and
\begin{equation}
    |p(\theta)-\tilde{p}(\theta)|\leq \frac{1}{2^{2n+2}}\,,
\end{equation}
for $\theta\leq 1$.
\end{theorem}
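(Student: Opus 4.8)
The plan is to replace $p(\theta)$ by a truncated Taylor polynomial \emph{together with one correction monomial} that forces exact agreement at the worst‑case point $\theta=m$, and then to observe that the factorial in the Taylor remainder lets us take the degree as low as $O(m/\log m)$.

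\emph{Step 1 (uniform normal form).} By \lem{p_theta_QAOA}, \lem{p_theta_haar} and \lem{p_theta_IQP}, in each of the three families we may write $p(\theta)=\sum_r A_r e^{i a_r\theta}$ with $a_r:=-\Delta\phi_r/m$ real and $|a_r|\le B$ for an absolute constant $B$ (one may take $B=2\pi$), where $r$ ranges over a set of total weight $S:=\sum_r|A_r|$, and $S=2^{n}$ for QAOA, $S\le N^{2m}=2^{O(m)}$ for Haar, and $S=1$ for IQP. In all cases $m=\Omega(n)$ (for QAOA and IQP the $X$/Hadamard layers already force $m\ge n$, and for Haar we are in the regime $m=\poly(n)$), so $\log_2 S+n=O(m)$.

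\emph{Step 2 (construction of $\tilde p$).} Let $q(\theta):=\sum_{k=0}^{d-1}\tfrac{p^{(k)}(0)}{k!}\,\theta^{k}=\sum_r A_r\sum_{k=0}^{d-1}\tfrac{(ia_r\theta)^k}{k!}$ be the degree-$(d-1)$ Taylor polynomial of $p$ at $0$, and set $\tilde p(\theta):=q(\theta)+c\,\theta^{d}$ with $c:=\big(p(m)-q(m)\big)/m^{d}$, so that $\tilde p(m)=q(m)+cm^{d}=p(m)$ and $\deg\tilde p\le d$. Using the integral form of Taylor's remainder for $t\mapsto e^{it}$, namely $\big|e^{it}-\sum_{k=0}^{d-1}\tfrac{(it)^k}{k!}\big|\le \tfrac{|t|^{d}}{d!}$ for real $t$, I get two estimates. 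First, for $\theta\in[0,1]$, $|p(\theta)-q(\theta)|\le\sum_r|A_r|\,\tfrac{|a_r\theta|^{d}}{d!}\le \tfrac{SB^{d}}{d!}$. Second, $|c|\le \tfrac{|p(m)-q(m)|}{m^{d}}\le \tfrac{1}{m^{d}}\sum_r|A_r|\,\tfrac{|a_r m|^{d}}{d!}\le \tfrac{SB^{d}}{d!}$, so $|c\,\theta^{d}|\le \tfrac{SB^{d}}{d!}$ on $[0,1]$. Hence $|p(\theta)-\tilde p(\theta)|\le \tfrac{2SB^{d}}{d!}$ for $\theta\in[0,1]$, while $\tilde p(m)=p(m)$ holds exactly.

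\emph{Step 3 (choice of $d$).} By Stirling $\tfrac{2SB^{d}}{d!}\le 2S\big(\tfrac{eB}{d}\big)^{d}$, so it suffices to make $d\log_2\!\big(\tfrac{d}{eB}\big)\ge \log_2 S+2n+3$. The right-hand side is $O(m)$ by Step 1, so taking $d=\lceil C\,m/\log m\rceil$ with $C$ a sufficiently large absolute constant gives $\log_2(d/eB)\ge\tfrac12\log_2 m$ and therefore $d\log_2(d/eB)\ge\tfrac{C}{2}m$, which exceeds the right-hand side for all large $m$. This yields $d=O(m/\log m)$ with the required error $\le 2^{-2n-2}$, proving the claim.

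\emph{Main obstacle.} The crux is the choice of expansion point and the correction term. Expanding $p$ directly around the worst-case point $\theta=m$ would put $|\theta-m|\,|a_r|=\Theta(m)$ into the remainder, forcing degree $\Omega(m)$ — exactly the bottleneck described in the introduction. Expanding at $\theta=0$, where on $[0,1]$ the exponents $a_r\theta$ are only $O(1)$, removes the factor of $m$; the cost is that $q$ no longer hits $p(m)$, and the substantive point is that a single monomial $c\,\theta^{d}$ repairs this with a coefficient $|c|$ that is exponentially small because the $\Theta(m)$-sized defect $p(m)-q(m)$ is divided by $m^{d}$, leaving the $[0,1]$ approximation intact. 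The remaining check is purely the arithmetic of Step 3, where $d!$ beating $B^{d}$ superexponentially is what makes $d=O(m/\log m)$ both small enough for the statement and large enough for the error bound.
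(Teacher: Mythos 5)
Your proposal is correct, and it reaches the same degree bound $d=O(m/\log m)$ by the same underlying mechanism (the $O(1)$ frequencies $|\Delta\phi_r|/m\le 2\pi$ after rescaling, a derivative bound of the form $S(2\pi)^d$, and the factorial beating it superexponentially), but the construction of $\tilde p$ differs from the paper's. The paper builds $\tilde p$ as the Lagrange interpolant of $p$ at $d+1$ nodes, $d$ of them in $[0,1]$ and one placed at $\theta=m$, so that $\tilde p(m)=p(m)$ is automatic and the error on $[0,1]$ comes from the standard interpolation remainder $\frac{|p^{(d+1)}(\theta_0)|}{(d+1)!}\prod_i(\theta-x_i)$, with the single factor $|\theta-m|\le m$ accounting for the extra $m$ on the right of \eq{d_requirement}. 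You instead take the explicit degree-$(d-1)$ Taylor polynomial of $p$ at $0$ and restore exactness at $\theta=m$ with a single correction monomial $c\,\theta^d$, showing $|c|\le SB^d/d!$ because the $m^d$ in the Taylor remainder at $m$ is cancelled by the $m^{-d}$ in the definition of $c$; this is a clean observation and makes the polynomial fully explicit rather than implicitly defined through interpolation nodes. Your bookkeeping of the weights $S=\sum_r|A_r|$ per family ($2^n$ for QAOA, $\le N^{2m}$ for Haar, $1$ for IQP) is consistent with \lem{p_theta_QAOA}, \lem{p_theta_haar} and \lem{p_theta_IQP}, and in fact recovers the sharper per-family conditions that the paper only extracts later in \cor{QAOA_IQP}. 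The Stirling step and the verification that $d=\lceil Cm/\log m\rceil$ suffices match the paper's arithmetic.
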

\begin{proof}
For some $d$ that will set later, let $d+1$ points $x_0,\dots x_d$ be such that the first $d$ of them are in $[0,1]$ and $x_d=m$. Then using standard approximation theory \cite[Chapter 6]{suli2003introduction}, there exists a degree $d$ polynomial $\tilde{p}(\theta)$ such that
\begin{equation}\label{eq:error}
    |p(\theta)-\tilde{p}(\theta)|\leq \frac{|p^{(d+1)}(\theta_0)|}{(d+1)!}(\theta-x_0)\dots (\theta-x_d)\,,
\end{equation}
where $p^{(d+1)}(\theta_0)$ is the $d+1^{st}$ derivative of $p(\theta)$ evaluated at some $\theta_0\in [0,m]$. We now bound this quantity. 

From \lem{p_theta_QAOA}, \lem{p_theta_haar} and \lem{p_theta_IQP}, we have that the expression for $p(\theta)$ for each of these circuit families is
\begin{equation}
    p(\theta)=\sum_r e^{i(\theta/m)\Delta\phi_r}A_r\,,
\end{equation}
where $r$ runs over a set of size at most $N^{2m}$ for each family (with $N=4$ being the two-qubit dimension). This gives us
\begin{equation}
    p^{(d+1)}(\theta_0)=\sum_{r}(-i\frac{\Delta\phi_r}{m})^{d+1}e^{i(\theta/m)\Delta\phi_r}A_r\,.
\end{equation}
Since $|\Delta\phi_r|/m\leq 2\pi$ and $|A_r|\leq 1$ and $r$ runs over a set of size $N^{2m}$, we have
\begin{equation}
    |p^{(d+1)}(\theta_0)|\leq N^{2m}(2\pi)^{d+1}\,.
\end{equation}
Plugging this back into \eq{error}, we see that in order to make 
\begin{equation}
    |p(\theta)-\tilde{p}(\theta)|\leq \frac{1}{2^{2m+2}}\,,
\end{equation}
for $\theta\leq 1$, we need to choose $d$ so that 
\begin{equation}\label{eq:d_requirement}
    (d+1)!\geq 2^{2m+2}N^{2m}(2\pi)^{d+1}m\,.
\end{equation}
Using the fact that
\begin{equation}
    \ln (d+1)!\geq (d+1)\ln \frac{d+1}{e}\,,
\end{equation}
we have (using the above two equations) that a sufficient condition on $d$ is
\begin{equation}
    (d+1)\ln \frac{d+1}{2\pi e}\geq (6m+2)\ln 2 + \ln m\,.
\end{equation}
Or more simply, another sufficient condition (when $m>4$) is
\begin{equation}
    (2\pi e)d'\ln d'\geq 8m\,,
\end{equation}
where $d'=(d+1)/(2\pi e)$. In other words, $d'\ln d'\geq m$ is sufficient. It is easy to see that when 
\begin{equation}
    d'\geq \frac{2m}{\ln m}\,,
\end{equation}
we have
\begin{equation}
    d'\ln d'\geq m\,.
\end{equation}
To see this, notice that
\begin{equation}
    d'\ln d' \geq \frac{2m}{\ln m}\ln (\frac{2m}{\ln m})=2m - m\frac{\ln(\frac{\ln m}{2})}{\frac{\ln m}{2}}\geq m\,,
\end{equation}
since $(1/2)\ln m\geq \ln ((1/2)\ln m)$. This gives us that when
\begin{equation}\label{eq:d_value}
    d+1=\frac{4\pi e m}{\ln m}\,,
\end{equation}
we have 
\begin{equation}
    (d+1)!\geq 2^{2m+2}N^{2m}(2\pi)^{d+1}m\,.
\end{equation}
Using this and since $x_0,\dots ,x_{d-1}$ are all less than $1$ and $x_d=m$, we have for $\theta\leq 1$ 
\begin{equation}
    |p(\theta)-\tilde{p}(\theta)|\leq \frac{1}{2^{2m+2}}\leq \frac{1}{2^{2n+2}}\,,
\end{equation}
since $m\geq n$. By construction, we also have $\tilde{p}(m)=p(m)$. This proves the theorem.
\end{proof}
For QAOA and IQP circuits, we can improve the dependence on $d$ as given in the following corollary.
\begin{corollary}\label{cor:QAOA_IQP}
For QAOA and IQP circuits, the dependence of $d$ can be improved to
\begin{equation}
    d=O(\frac{n}{\log n})\,.
\end{equation}
\end{corollary}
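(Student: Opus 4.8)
The plan is to rerun the proof of \thm{polynomial} essentially verbatim, but to replace the crude bound $|p^{(d+1)}(\theta_0)|\le N^{2m}(2\pi)^{d+1}$ by the sharper estimate that \lem{p_theta_QAOA} and \lem{p_theta_IQP} make available for these two families. The only place where $m$ (rather than $n$) entered the final degree bound in \thm{polynomial} was the size $N^{2m}$ of the index set over which $r$ ranges, combined with the estimate $|A_r|\le 1$. But for QAOA the expansion $p(\theta)=\sum_r e^{i(\theta/m)\Delta\phi_r}A_r$ has only $2^{4n}$ terms, each with $|A_r|=2^{-3n}$, and for IQP it has $2^{2n}$ terms, each with $|A_r|=2^{-2n}$; hence in both cases $\sum_r|A_r|=2^{O(n)}$ — indeed $2^n$ for QAOA and $O(1)$ for IQP — in place of the $2^{O(m)}$ bound used for Haar circuits. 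This is the one genuinely new ingredient; everything else is bookkeeping.

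Concretely, I would keep the same interpolation nodes $x_0,\dots,x_{d-1}\in[0,1]$ and $x_d=m$, so the degree-$d$ interpolant $\tilde p(\theta)$ again satisfies $\tilde p(m)=p(m)$, and by \eq{error}, for $\theta\le 1$,
\begin{equation}
    |p(\theta)-\tilde p(\theta)|\le \frac{|p^{(d+1)}(\theta_0)|}{(d+1)!}\,|(\theta-x_0)\cdots(\theta-x_d)|\le \frac{|p^{(d+1)}(\theta_0)|}{(d+1)!}\,m,
\end{equation}
since $|\theta-x_i|\le 1$ for $i<d$ and $|\theta-x_d|\le m$. Differentiating $p(\theta)$ term by term and using $|\Delta\phi_r|/m\le 2\pi$ from the two lemmas gives $|p^{(d+1)}(\theta_0)|\le (2\pi)^{d+1}\sum_r|A_r|\le 2^{O(n)}(2\pi)^{d+1}$, where I use that each exponential has modulus $1$. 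So it suffices to pick $d$ with $(d+1)!\ge 2^{2n+2}\cdot 2^{O(n)}\cdot(2\pi)^{d+1}\cdot m$, which targets the additive error $2^{-(2n+2)}$ rather than the $2^{-(2m+2)}$ needed for Haar circuits.

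To finish, I would use that the circuits in these families have $m=\poly(n)$ gates, so $\ln m=O(\log n)$ is lower order and the logarithm of the right-hand side above is $O(n)$; then the same Stirling-type estimate $\ln(d+1)!\ge(d+1)\ln\frac{d+1}{e}$ used in \thm{polynomial} reduces the requirement to $d'\ln d'\ge cn$ with $d'=(d+1)/(2\pi e)$ and $c$ a constant, and (exactly as there) $d'\ge 2cn/\ln n$ suffices, yielding $d=O(n/\log n)$ and $|p(\theta)-\tilde p(\theta)|\le 2^{-(2n+2)}$ for $\theta\le 1$. I do not expect a real obstacle: the only point needing care is confirming that the $\ln m$ term is genuinely lower order, i.e. that $m=\poly(n)$, which holds for all circuit families considered; otherwise the bound would instead read $d=O((n+\log m)/\log(n+\log m))$.
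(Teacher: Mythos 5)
Your proposal is correct and matches the paper's proof: the paper likewise rewrites the requirement \eq{d_requirement} using the sharper counts from \lem{p_theta_QAOA} and \lem{p_theta_IQP} (giving $\sum_r|A_r|=2^n$ for QAOA and $O(1)$ for IQP, hence conditions $(d+1)!\geq 2^{3n+2}(2\pi)^{d+1}m$ and $(d+1)!\geq 2^{2n+2}(2\pi)^{d+1}m$ respectively) and then invokes $m=\poly(n)$ to conclude $d=O(n/\log n)$. Your write-up is in fact somewhat more explicit than the paper's about the Stirling step and the role of the $\ln m$ term.
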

\begin{proof}
For QAOA and IQP circuits, the requirement on $d$ from \eq{d_requirement} can be modified to
\begin{align}
    &(d+1)!\geq 2^{3n+2}(2\pi)^{d+1}m \text{ for QAOA}\label{eq:conditions_QAOA}\\
    &(d+1)!\geq 2^{2n+2}(2\pi)^{d+1}m \text{ for IQP}\label{eq:conditions_IQP}\,.
\end{align}
Since $m$ is $\poly(n)$, we can take $m\leq O(n^s)$, for some constant $s$. This means that in both cases, we can take 
\begin{equation}
    d=O(\frac{n}{\log n})\,,
\end{equation}
to satisfy the conditions in \eq{conditions_QAOA} and \eq{conditions_IQP}.
\end{proof}

Next, we prove a result on the total variation distance of the perturbed distributions for circuits from $\mathcal{F}$.
\begin{theorem}\label{thm:TVD}
    Given a quantum circuit family from $\mathcal{F}$, let $C$ be a circuit from that family with $m$ gates and let $C(\theta)$ be an interpolated circuit. Let the distribution over all circuits from $\mathcal{F}$ at a value of $\theta$ be $\mathcal{H}(\theta)$. Then
    \begin{enumerate}
        \item $C(m)=C$.
        \item For $\theta\le \Delta$, where $\Delta=O(1/N)$, we have that the total variation distance between the distributions $\mathcal{H}(\theta)$ and $\mathcal{H}(0)$ is a constant.
    \end{enumerate}
\end{theorem}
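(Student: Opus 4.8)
The plan is to establish the two assertions in turn; the first is a one-line substitution and the second is where all the work lies.

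\emph{Part 1 ($C(m)=C$).} In every family the parameter $\theta$ enters only through the factor $(1-\theta/m)$ multiplying the ``random'' ingredient of each gate: the extra phases $(1-\theta/m)\phi^{x/z}_{k_j}$ in \defn{QAOA_interp}, the inserted unitary $\exp(-\tfrac{\theta}{m}\log H_j)$ in \defn{interpolation_haar}, and the extra phases $(1-\theta/m)\phi^{(i)}_z$ for IQP. Setting $\theta=m$ kills this factor, so every extra phase vanishes and $\exp(-\log H_j)=H_j^{-1}$; each gate collapses to its worst-case value and hence $C(m)=C$ (as already noted in the remark after \defn{QAOA_interp}).

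\emph{Part 2 (constant total variation distance).} The structural point I would use is that in all three families the circuit is an ordered product of $m$ gates whose randomness is independent across gates, both under $\mathcal{H}(\theta)$ and under $\mathcal{H}(0)$. Writing $P^{(\theta)}_j$ for the law of the $j$-th gate, and noting that assembling the gates into a circuit is a deterministic map, the data-processing inequality together with subadditivity of total variation distance on product measures gives
\begin{equation}
  d_{TV}\big(\mathcal{H}(\theta),\mathcal{H}(0)\big)\ \le\ d_{TV}\Big(\bigotimes_j P^{(\theta)}_j,\ \bigotimes_j P^{(0)}_j\Big)\ \le\ \sum_{j=1}^{m} d_{TV}\big(P^{(\theta)}_j,P^{(0)}_j\big)\,.
\end{equation}
It then suffices to bound each summand by $O(\theta/m)$, with the implied constant depending only on the gate dimension $N\le 4$, so that the sum is $O(\theta)$ and hence a small constant once $\theta\le\Delta=O(1/N)$.

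For a diagonal phase gate (the $X$ and $Z$ gates of QAOA, the $Z$ gates of IQP) the $N$ eigenphases are rescaled independently from $\phi_k$ to $(1-\theta/m)\phi_k$, the fixed worst-case shifts $h_k$ being irrelevant to the distance, so subadditivity again reduces matters to $\sum_k d_{TV}\big((1-\theta/m)\phi_k,\phi_k\big)$; for $\phi_k$ uniform on the circle the rescaled law is uniform on an arc of length $2\pi(1-\theta/m)$, giving $d_{TV}=\theta/m$ by an elementary computation, hence at most $N\theta/m$ per gate (for the more general continuous phase distributions of \defn{QAOA_dist}, e.g.\ the scaled Gaussians of the Sherrington--Kirkpatrick model, the same $O(\theta/m)$ bound follows from boundedness of the density -- the one place a mild regularity assumption is needed). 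For a Haar gate I would write $H_j=V^\dagger\,\mathrm{diag}(e^{i\phi_1},\dots,e^{i\phi_N})\,V$ with eigenphase vector $\Phi=(\phi_1,\dots,\phi_N)$; the interpolated gate is $V^\dagger\,\mathrm{diag}(e^{i(1-\theta/m)\phi_1},\dots,e^{i(1-\theta/m)\phi_N})\,V\,G_j$, and since right multiplication by the fixed $G_j$ preserves total variation distance and the Haar eigenvectors $V$ are independent of $\Phi$ (which carries the Weyl density on $[-\pi,\pi]^N$), conditioning on $V$ and applying data processing to the reconstruction map $\Phi\mapsto V^\dagger\mathrm{diag}(e^{i\phi_1},\dots,e^{i\phi_N})V$ yields $d_{TV}(P^{(\theta)}_j,P^{(0)}_j)\le d_{TV}\big((1-\theta/m)\Phi,\Phi\big)$. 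Rescaling the Weyl law by $(1-\theta/m)$ pushes mass out only of the box $[-(1-\theta/m)\pi,(1-\theta/m)\pi]^N$ -- and since the single-eigenphase marginal of a Haar unitary is uniform, this escaping mass is $O(N\theta/m)$ -- while distorting the bounded density on the remaining box by a factor $1+O(\mathrm{poly}(N)\,\theta/m)$, so the distance is $O(\mathrm{poly}(N)\,\theta/m)=O(\theta/m)$ since $N\le 4$.

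Summing the per-gate bounds over the at most $m$ gates gives $d_{TV}(\mathcal{H}(\theta),\mathcal{H}(0))\le c_N\theta$ with $c_N=O(N)=O(1)$, so taking $\Delta:=\varepsilon_0/c_N=O(1/N)$ makes the distance at most any prescribed constant $\varepsilon_0$ for all $\theta\le\Delta$, as claimed. I expect the main obstacle to be the Haar per-gate estimate -- isolating the eigenphase dependence via the independence of the Haar eigenvectors and eigenphases, and then controlling the total variation distortion when the Weyl density is rescaled -- whereas the diagonal-phase cases and Part 1 are routine. One minor point worth flagging is that $\mathcal{H}(0)$ is literally the $\theta=0$ slice of the interpolation rather than the ambient random family of \defn{QAOA_dist}, \defn{distribution_haar}, \defn{distribution_IQP}; for Haar circuits these agree exactly, and for QAOA and IQP they differ only by the fixed phase shifts $h_k$, which is immaterial for this statement.
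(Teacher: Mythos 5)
Your proof is correct and takes essentially the same route as the paper's: both reduce the claim to a per-gate comparison of the eigenphase densities under the rescaling $\phi\mapsto(1-\theta/m)\phi$, bound the resulting per-gate total variation distance by $O(\theta/m)$ for $\theta/m\le O(1/(mN))$, and accumulate over the $m$ independent gates. Your write-up is in fact more explicit than the paper's (which leaves the subadditivity step, the Haar eigenvector/eigenphase independence, and the regularity needed of the phase densities largely implicit), but the underlying argument is the same.
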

\begin{proof}
Let the circuit $C(\theta)$ with $m$ gates be 
\begin{equation}
    C(\theta)=G_1(\theta)\dots G_m(\theta)\,,
\end{equation}
where each $G_i(\theta)$ is of dimension $N$. From the definition of interpolation for each family, it is easy to see that $(1)$ in the statement of the theorem. 

For the proof of $(2)$, we follow the proof in \cite{bouland2021noise}. Let $\nu_\theta$ be the probability density function of the distribution of $C_i(\theta)$ and let $\Phi$ denote the set of eigenvalues of $C_i(\theta)$. Since for Haar random unitary gates, the eigenvalues and eigenvectors are independent, and for QAOA and IQP circuits, we only pick eigenvalues at random, let $\nu_\theta$ be an eigenvalue density where 
\begin{equation}
    \nu_\theta(\phi)=\nu(f_\theta^{-1}(\phi))\frac{df_\theta^{-1}(\phi)}{d\phi}\,,
\end{equation}
with $f_\theta(\phi)=(1-\theta/m)\phi$. At $\theta=0$, $\nu_0(\phi)$ comes from the distribution of random gates for each family in $\mathcal{F}$. Now
\begin{equation}
    \frac{df_\theta^{-1}(\phi)}{d\phi}=\frac{1}{1-\theta/m}\,.
\end{equation}
Using this, for $\theta/m\leq 1/N$, we get 
\begin{align}
    |f_\theta^{-1}(\phi)-\phi|\leq O(N^{-1})\,,\hspace{0.3in} \Big|\frac{df_\theta^{-1}(\phi)}{d\phi}-1\Big|\leq O(N^{-1})\,,
\end{align}
since $|\phi|\leq 2\pi$ and $\theta/m\leq 1$.

The total variation distance between the distributions at $\theta=0$ and a nonzero $\theta$ is
\begin{equation}
    D_{TV}(\nu_\theta,\nu_0)=\frac{1}{2}\int |\nu_\theta(\Phi)-\nu_0(\Phi)|\prod_id\phi_i\,.
\end{equation}
For $\theta/m\leq O(N^{-1})$, the integrand can be estimated as follows.
\begin{align}
    &\Big|\nu_\theta(\Phi)-\nu_0(\Phi)\Big|\nonumber\\
    &=\Big|\nu_0(f_\theta^{-1}(\Phi))\prod_i\Big|\frac{f_\theta^{-1}(\phi_i)}{d\phi_i}\Big|-\nu_0(\Phi)\Big|\,,\\
    &\leq \Big|\Big(\nu_0(\Phi)+\sum_i O(N^{-1})\Big)\prod_i(1+O(N^{-1}))-\nu_0(\Phi)\Big|\,,\\
    &\leq O(1)\,.
\end{align}
The integral is also bounded by $O(1)$ since each $|\phi_i|\leq 2\pi$. Since there are $m$ gates, we have that if $\theta/m\leq  O(1/(m N))$ i.e., $\theta\leq O(1/N)$, then $\mathcal{H}(0)$ and $\mathcal{H}(\theta)$ are $O(1)$-close in total variation distance.
\end{proof}

In this section, we prove our main hardness result in \thm{main}. Before we state and prove it, we need the following results. The first one, \thm{RBW_P}, improves on the result proved in \cite{bouland2021noise} stated below.
\begin{theorem}[\cite{bouland2021noise}]\label{thm:RBW}
    For a degree $d$ polynomial $P(x)$, suppose there is a set of points $D=\{(x_i,y_i)\}$ such that $|D|=100d^2$, and $x_i$ are equally spaced in the interval $[0,\Delta]$ ($\Delta<1$). Furthermore, assume that each point satisfies 
    \begin{equation}
        \pr[|y_i-P(x_i)|\geq\delta]\leq\eta\,,
    \end{equation}
    where $\eta<1/4$ is a constant, then there exists a $\mathsf{P}^{\NP}$ algorithm that takes $D$ as input and returns a number $p_1$ such that
    \begin{equation}
        |p_1-P(1)|\leq \delta\exp(d\log\Delta^{-1}+O(d))\,,
    \end{equation}
    with success probability at least $2/3$.
\end{theorem}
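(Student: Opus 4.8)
The plan is to treat this as a robust polynomial interpolation problem: fit a degree-$d$ polynomial $Q$ to the noisy data $\{(x_i,y_i)\}$ on the small window $[0,\Delta]$, then extrapolate its value to $x=1$, controlling separately (a) how tightly $Q$ is pinned down by the data, and (b) the large-but-finite amplification incurred when a degree-$d$ polynomial known only on $[0,\Delta]$ is evaluated at $1$. Concretely, I would run a three-stage argument: first argue that a large fraction of the $|D|=100d^2$ points are ``good'' (within $\delta$ of $P$), so that $P$ is itself a valid low-degree fit on a large subset; then use the $\NP$ oracle to \emph{find} such a fit $Q$; then show that \emph{any} accepted $Q$ has $Q(1)$ within the claimed error of $P(1)$, and output $Q(1)$.

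For the first stage, since each point satisfies $\pr[|y_i-P(x_i)|\ge\delta]\le\eta$ with $\eta<1/4$ a constant, a concentration bound (Markov on the number of bad points, using the slack in $|D|=100d^2$, or Chernoff if the per-point failures are independent) shows that with probability at least $2/3$ all but a constant fraction of the points are good, leaving a set $G$ of good indices with $|G|\gg d$; on $G$ the true polynomial $P$ satisfies $|P(x_i)-y_i|\le\delta$. For the second stage, the $\mathsf{P}^{\NP}$ machine uses the oracle to guess a large index subset $S$ together with the coefficients of a degree-$d$ polynomial $Q$ and to verify $|Q(x_i)-y_i|\le\delta$ for all $i\in S$; by the first stage a valid pair exists (take $S=G$, $Q=P$). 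The only subtlety here is finite precision: ``is there a degree-$d$ $Q$ agreeing with the $y_i$ within $\delta$ on $S$?'' is a linear feasibility problem, so after truncating the data to polynomially many bits a feasible instance admits a rational witness of polynomial bit length, which is exactly what the oracle guesses.

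The quantitative core is the third stage. Let $Q_1,Q_2$ be any two accepted fits; their supporting sets each omit only a constant fraction of the $100d^2$ equally spaced grid points, so $S_1\cap S_2$ still contains $d+1$ grid nodes, all lying in an $O(\Delta)$-length sub-interval of $[0,\Delta]$ and pairwise separated by at least $\Delta/(100d^2)$, on which $R\coloneqq Q_1-Q_2$ (degree $\le d$) satisfies $|R|\le 2\delta$. Expanding $R(1)$ by Lagrange interpolation through these nodes, every numerator factor $|1-x_{j_\ell}|\le 1$ while every denominator product is at least $(\Delta/100d^2)^d$, giving $|R(1)|\le 2\delta(d+1)(100d^2/\Delta)^d$; this is the origin of the multiplicative blow-up, whose sharp form $\exp(d\log\Delta^{-1}+O(d))$ reflects the extremal (Chebyshev) growth $|T_d(2\Delta^{-1}-1)|$ of a degree-$d$ polynomial bounded on $[0,\Delta]$ and evaluated at $1$. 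Taking $Q_1=Q$ and $Q_2=P$ restricted to $G$ yields $|Q(1)-P(1)|\le\delta\exp(d\log\Delta^{-1}+O(d))$, which is the conclusion.

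The step I expect to be hardest is exactly this last one: getting the extrapolation factor down to $\exp(O(d))$ — rather than the cruder $\exp(O(d\log d))$ that a naive Lagrange estimate on an arbitrary $(d+1)$-node subset produces — genuinely requires exploiting that the grid is equally spaced and fine ($|D|\gg d^2$), for instance by combining a Markov--Bernstein derivative bound (to pass from ``small on the grid'' to ``small on all of $[0,\Delta]$'') with the Chebyshev extrapolation estimate; and one must simultaneously check that the finite-precision truncation of stage two does not corrupt this bound. These two points are precisely what make the argument delicate and what keep it just outside a plain $\mathsf{P}$ algorithm, the $\BPP$ improvement being the subject of the following theorem, \thm{RBW_P}.
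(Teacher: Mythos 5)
First, note that the paper does not prove Theorem~\ref{thm:RBW} at all: it is quoted verbatim from \cite{bouland2021noise} and used as a black box, so there is no in-paper proof to compare against. Your three-stage outline (show most points are good; use the $\NP$ oracle to find a low-degree fit consistent with a large subset; argue any two accepted fits extrapolate to nearly the same value at $1$) is indeed the structure of the argument in the cited source, and stages one and two are essentially fine modulo constants (one caveat: with only the per-point guarantee $\pr[|y_i-P(x_i)|\ge\delta]\le\eta$ and no independence, Markov with $\eta<1/4$ gives a good set of size roughly $25d^2$ with probability $2/3$, which is \emph{not} a majority of the $100d^2$ points; your ``any two accepted fits overlap in many nodes'' step silently needs each accepted fit to cover more than half of $D$, so you must either demand a strict-majority fit, assume independence, or use the Berlekamp--Welch error-locator formulation to handle the minority of corrupted points).

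The genuine gap is in stage three, and you have correctly located it but not closed it. The Lagrange computation you actually carry out --- pick $d+1$ common nodes, lower-bound each denominator factor by the minimum grid spacing $\Delta/(100d^2)$ --- yields $|R(1)|\le 2\delta(d+1)(100d^2/\Delta)^d=\delta\exp(d\log\Delta^{-1}+O(d\log d))$, which is weaker than the claimed $\delta\exp(d\log\Delta^{-1}+O(d))$ by exactly the factor your final paragraph admits is the hard part. Discarding all but $d+1$ of the $\Omega(d^2)$ common nodes throws away precisely the information that makes the sharp bound possible: an adversarial choice of which quarter of the grid is corrupted can leave you with $d+1$ surviving nodes that are badly distributed, and then no Lagrange estimate recovers $\exp(O(d))$. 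The missing ingredient is a discrete Remez-type (Coppersmith--Rivlin) inequality: a degree-$d$ polynomial bounded by $2\delta$ on a constant fraction of $\Omega(d^2)$ equally spaced points in $[0,\Delta]$ is bounded by $2\delta\cdot 2^{O(d)}$ on \emph{all} of $[0,\Delta]$; this is where the otherwise unexplained hypothesis $|D|=100d^2$ earns its keep. Once $R$ is controlled uniformly on $[0,\Delta]$, the extrapolation to $x=1$ with loss $\exp(d\log\Delta^{-1}+O(d))$ follows from the Chebyshev/coefficient bound --- the same \cite[Lemma 4.1]{Sherstov} step the paper itself deploys in the proof of Theorem~\ref{thm:RBW_P}. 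A Markov--Bernstein derivative bound alone, as you suggest, does not substitute for the discrete Remez step, since smallness at scattered nodes plus a derivative bound does not propagate to the gaps without knowing the nodes are dense at scale $\Delta/d^2$. So: right skeleton, but the quantitative core of the theorem is asserted rather than proved.
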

We point out below in \thm{RBW_P} that the above result can be improved to give a $\BPP$ algorithm, rather than a $\mathsf{P}^{\NP}$ algorithm. We need the next result proved in \cite{kane2017robust}, which solves the problem of polynomial regression with constant outlier probability with nearly optimal parameters.
\begin{theorem}[\cite{kane2017robust}]\label{thm:robust_poly}
    Suppose $Q(x):[-1,1]\rightarrow \mathbb{R}$ is a degree $d$ polynomial and $D=\{(x_i,y_i)\}$ is a set of points such that $|D|=O(\frac{d}{\epsilon}\log\frac{d}{\eta'\epsilon})$, where $x_i$ are drawn from the Chebyshev distribution. Suppose also that
    \begin{equation}
        \pr[|y_i-Q(x_i)|\geq\delta]\leq\eta\,,
    \end{equation}
    where $\eta<1/2$, then there exists an algorithm that runs in time polynomial in the sample size and finds, with probability at least $1-\eta'$, a degree $d$ polynomial $\hat{Q}$ such that 
    \begin{equation}
        |Q-\hat{Q}|_\infty\leq (2+\epsilon)\delta\,.
    \end{equation}
\end{theorem}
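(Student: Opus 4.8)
This statement is quoted from \cite{kane2017robust} and is used below essentially as a black box; I nonetheless sketch the approach one would take to prove it. The plan is to transport the problem to the circle and combine a robust local estimator on short arcs with a sharp Bernstein-type inequality. First I would substitute $x=\cos\theta$, which turns a degree-$d$ algebraic polynomial $Q$ on $[-1,1]$ into an even degree-$d$ trigonometric polynomial $\tilde Q(\theta)=Q(\cos\theta)$ on $[0,2\pi)$ and turns the Chebyshev (arcsine) distribution into the uniform distribution on $[0,2\pi)$. Partition $[0,2\pi)$ into $K=\Theta(d/\epsilon)$ equal arcs $I_1,\dots,I_K$. The algorithm: draw $|D|$ i.i.d.\ uniform samples; for each threshold $\tau$ on a geometric grid (refined by binary search to relative precision $\epsilon$), decide whether there is a degree-$d$ polynomial $\hat Q$ such that in every arc a strict majority of the sampled points obey $|y_i-\hat Q(\theta_i)|\le\tau$; output such a $\hat Q$ for the smallest feasible $\tau$. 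The feasibility test is the computational heart: the ``majority in every arc'' condition is a disjunction of linear systems in the $d+1$ Chebyshev coefficients of $\hat Q$, so one must argue that an appropriate convex/LP relaxation (for instance minimizing a sum of per-arc slack variables, or an $\ell_1$-type surrogate) is solvable in $\poly(|D|)$ time and, on the event below, loses only an $O(\epsilon)$ factor.

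Second, the sample bound. With $|D|=O(\tfrac{d}{\epsilon}\log\tfrac{d}{\eta'\epsilon})$ uniform samples and $K=\Theta(d/\epsilon)$ arcs, a balls-in-bins Chernoff estimate shows that, except with probability $\le\eta'/2$, every arc receives $\Omega(\log(K/\eta'))$ samples; a Chernoff bound inside each arc together with the hypothesis $\eta<1/2$, union-bounded over the $K$ arcs, shows that except with probability $\le\eta'/2$ a strict majority of the samples in every arc are ``good'', i.e.\ satisfy $|y_i-\tilde Q(\theta_i)|\le\delta$. Condition on the intersection of these two events.

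Third, correctness on this event. The true $\tilde Q$ certifies that $\tau=\delta$ is feasible (a majority is good in every arc), so the binary search returns some $\tau^\star\le(1+\epsilon/4)\delta$. Fix the output $\hat Q$ with this $\tau^\star$. In each arc $I_j$ the majority with $|y_i-\hat Q(\theta_i)|\le\tau^\star$ and the majority of good points must overlap, so there is a point $\theta_{i_j}\in I_j$ with $|\hat Q(\theta_{i_j})-\tilde Q(\theta_{i_j})|\le\tau^\star+\delta\le(2+\epsilon/2)\delta$. Now $R:=\hat Q-\tilde Q$ is an even trigonometric polynomial of degree $\le d$ that is at most $(2+\epsilon/2)\delta$ at one point in each of $K=\Theta(d/\epsilon)$ equal arcs, hence on a mesh of spacing $O(\epsilon/d)$. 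Letting $\theta^{\ast}$ maximize $|R|$ and $t$ the nearest mesh point, Bernstein's inequality $\|R'\|_\infty\le d\|R\|_\infty$ gives $\|R\|_\infty=|R(\theta^{\ast})|\le|R(t)|+O(\epsilon/d)\cdot d\|R\|_\infty$, so $\|R\|_\infty\le(1+O(\epsilon))\max_j|R(\theta_{i_j})|\le(2+\epsilon)\delta$; undoing $x=\cos\theta$ gives $\|\hat Q-Q\|_\infty\le(2+\epsilon)\delta$. Success probability $\ge1-\eta'$ and running time $\poly(|D|)$ follow from the above.

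The hard part is not the approximation-theoretic step --- Bernstein's inequality makes it a one-line estimate once the mesh is $O(\epsilon/d)$ --- but the design and analysis of the robust regression step so that it is simultaneously polynomial time and free of any dependence on $\|Q\|_\infty$. A naive pipeline (empirical median $\hat m_j$ per arc, then an $\ell_\infty$ polynomial fit to $(\mathrm{center}(I_j),\hat m_j)$) does not give $(2+\epsilon)\delta$: the fit at an arc center incurs the oscillation of $Q$ over that arc, which is only $O(\epsilon\|Q\|_\infty)$, not $O(\epsilon\delta)$. Escaping this forces one to compare $\hat Q$ and $Q$ at a common sampled abscissa inside each arc --- the ``majority'' formulation above, or an $\ell_1$-type objective with the same effect --- and proving that a tractable convex relaxation enforces this property up to an $O(\epsilon)$ slack, even though the algorithm never sees $\delta$ or $\|Q\|_\infty$, is the crux and is where the sharp constant $2+\epsilon$, matching the information-theoretic lower bound, is earned. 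Everything else (the two Chernoff bounds, the binary search over $\tau$, the change of variables) is routine.
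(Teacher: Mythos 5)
The paper offers no proof of this statement: it is imported verbatim from the cited reference and used as a black box, so your decision to treat it the same way is exactly what the paper does. Your accompanying sketch (Chebyshev change of variables to the uniform measure on arcs, arc-wise majority/median estimates, and a Bernstein-type inequality to upgrade mesh-point control to an $L_\infty$ bound) is a fair account of the cited work's actual argument, but none of it is needed for, or checkable against, anything in this paper.
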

Using this theorem, we can prove the following.
\begin{theorem}\label{thm:RBW_P}
    For a degree $d$ polynomial $P(x)$, suppose there is a set of points $D=\{(x_i,y_i)\}$ such that $|D|=O(d\log d)$ and $x_i\in [0,\Delta]$ are such that $(2x_i/\Delta)-1$ are Chebyshev distributed in the interval $[-1,1]$, where $\Delta<1$. Suppose also that each point satisfies 
    \begin{equation}
        \pr[|y_i-P(x_i)|\geq\delta]\leq\eta\,,
    \end{equation}
    where $\eta<1/4$ is a constant, then there exists a $\BPP$ algorithm that takes $D$ as input and returns a number $p_m$ such that
    \begin{equation}
        |p_m-P(m)|\leq \frac{9\delta}{4}\frac{(8m)^d}{\Delta^d}=\delta'\exp(-d\log \Delta')\,,
    \end{equation}
    with success probability at least $2/3$, where
    \begin{equation}
        \Delta'=\frac{\Delta}{8m}\hspace{0.2in}\text{and}\hspace{0.2in}\delta'=\frac{9\delta}{4}\,.
    \end{equation}
\end{theorem}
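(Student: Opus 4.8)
The plan is to reduce the statement to a direct invocation of the robust polynomial regression theorem \thm{robust_poly}, followed by a single controlled extrapolation step that uses the extremal growth of Chebyshev polynomials outside $[-1,1]$.

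First I would pass to a normalized variable so that the sample interval becomes $[-1,1]$. Put $t=(2x/\Delta)-1$ and define the degree-$d$ polynomial $Q(t)=P\!\big(\tfrac{\Delta}{2}(t+1)\big)$ for $t\in[-1,1]$. By hypothesis the points $t_i=(2x_i/\Delta)-1$ are Chebyshev distributed on $[-1,1]$, and since $Q(t_i)=P(x_i)$ the tail bound becomes $\Pr[\,|y_i-Q(t_i)|\ge\delta\,]\le\eta$ with $\eta<1/4<1/2$, so the hypotheses of \thm{robust_poly} are met.

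Second I would apply \thm{robust_poly} with the constants $\epsilon=1/4$ and $\eta'=1/3$. Then the sample requirement $O\!\big(\tfrac{d}{\epsilon}\log\tfrac{d}{\eta'\epsilon}\big)$ is $O(d\log d)$, matching $|D|$, and the regression algorithm runs in time polynomial in $|D|$ and hence in $d$; this polynomial running time with no oracle is exactly what upgrades the conclusion from $\mathsf{P}^{\NP}$ to $\BPP$. With probability at least $1-\eta'=2/3$ it returns a degree-$d$ polynomial $\hat Q$ with $\|Q-\hat Q\|_{\infty,[-1,1]}\le(2+\epsilon)\delta=\tfrac{9}{4}\delta$. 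The algorithm then outputs $p_m:=\hat Q\!\big(\tfrac{2m}{\Delta}-1\big)$.

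Third, the step carrying the real content, I would control the extrapolation error. Since $Q\!\big(\tfrac{2m}{\Delta}-1\big)=P(m)$, we have $|p_m-P(m)|=|R(t_m)|$ with $R:=\hat Q-Q$ of degree $\le d$, $\|R\|_{\infty,[-1,1]}\le\tfrac94\delta$, and $t_m:=\tfrac{2m}{\Delta}-1\ge1$ (as $m\ge1$ and $\Delta<1$). By the classical fact that a degree-$d$ polynomial bounded by $1$ on $[-1,1]$ is dominated in absolute value by $|T_d|$ everywhere outside the interval, $|R(t_m)|\le\tfrac94\delta\,|T_d(t_m)|$; combining $|T_d(t)|\le(2|t|)^d$ for $|t|\ge1$ with $2t_m<4m/\Delta<8m/\Delta$ gives $|T_d(t_m)|<(8m/\Delta)^d$. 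Hence $|p_m-P(m)|\le\tfrac{9\delta}{4}\tfrac{(8m)^d}{\Delta^d}=\delta'\exp(-d\log\Delta')$ with $\Delta'=\Delta/(8m)$ and $\delta'=9\delta/4$, and the success probability $2/3$ is inherited from \thm{robust_poly}.

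The only non-routine point is the extrapolation bound in the third step: one must invoke the sharp Chebyshev extremal inequality (from the Markov--Chebyshev family of results), because estimating $R$ at $t_m$ by a generic Lagrange-type bound would reintroduce an $\exp(O(d\log d))$ factor and defeat the linear-in-$d$ scaling that the whole construction is after. Everything else — the affine reparametrization and the arithmetic of the constants $\epsilon$, $\eta'$, $(2+\epsilon)=9/4$ — is bookkeeping.
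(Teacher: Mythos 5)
Your proposal is correct and follows the same architecture as the paper's proof: the identical affine reparametrization $Q(t)=P(\tfrac{\Delta}{2}(t+1))$, the same invocation of \thm{robust_poly} with $\epsilon=1/4$ and $\eta'=1/3$ giving $\|Q-\hat Q\|_\infty\le\tfrac94\delta$ in randomized polynomial time (which is exactly how the paper upgrades $\mathsf{P}^{\NP}$ to $\BPP$), and the same definition $p_m=\hat Q(\tfrac{2m}{\Delta}-1)$. The one place you diverge is the extrapolation bound on $R=\hat Q-Q$ at $t_m=\tfrac{2m}{\Delta}-1$: you use the Chebyshev comparison theorem ($|R(t)|\le\|R\|_\infty|T_d(t)|$ for $|t|\ge1$, together with $|T_d(t)|\le(2|t|)^d$), whereas the paper instead bounds the $\ell_1$ norm of the coefficients of $R$ via \cite[Lemma 4.1]{Sherstov} ($\sum_i|a_i|\le4^d\|R\|_\infty$) and then evaluates the monomials at $2m/\Delta$, giving $4^d(2m/\Delta)^d=(8m/\Delta)^d$. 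Both routes land on exactly the factor $(8m/\Delta)^d$ claimed in the statement, so the difference is cosmetic; if anything your Chebyshev route is marginally tighter before you relax $2t_m<4m/\Delta$ up to $8m/\Delta$ to match the stated constant, and your remark that a naive Lagrange-type extrapolation would reintroduce an $\exp(O(d\log d))$ loss correctly identifies why some such sharp bound is needed here.
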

\begin{proof}
The proof essentially puts together \thm{robust_poly} above and \cite[Lemma 8]{bouland2021noise}. First pick $Q(x)=P(\frac{x+1}{2}\Delta)$ so that $Q(x):[-1,1]\rightarrow \mathbb{R}$ is a degree $d$ polynomial. Define a new set of points $D'=\{(x'_i,y_i)\}$, where $x'_i=(2x/\Delta)-1$. The points $x'_i$ are Chebyshev distributed $\mathcal{D}\sim\frac{1}{\sqrt{1-x_i{'^2}}}$. Pick $D$ such that $|D|=O(\frac{d}{\epsilon}\log\frac{d}{\eta'\epsilon})$, where $\eta'=1/3$ and $\epsilon=1/4$. With respect to the set of points $D'$, the polynomial $Q(x)$ satisfies $Q(x):[-1,1]\rightarrow \mathbb{R}$ and 
\begin{equation}
    \pr[|y_i-Q(x'_i)|\geq \delta]\leq\eta\,.
\end{equation}
Therefore, using \thm{robust_poly}, we can find a degree $d$ polynomial $\hat{Q}(x)$ with probability $1-\eta$ in polynomial time such that 
\begin{equation}
    |Q-\hat{Q}|_\infty\leq \frac{9\delta}{4}=\delta'\,.
\end{equation}
Now define $p_m=\hat{Q}(\frac{2m}{\Delta}-1)$. Since we have $P(m)=Q(\frac{2m}{\Delta}-1)$, we get
\begin{equation}
    |p_m-P(m)|=|R(\frac{2m}{\Delta}-1)|\,,
\end{equation}
where $R(x)=Q(x)-\hat{Q}(x)$. Since $|R(x)|\leq \delta'$ for $x\in [-1,1]$, using \cite[Lemma 4.1]{Sherstov}, we have 
\begin{equation}
    \sum_{i=0}^d |a_i|\leq 4^d\delta'\,,
\end{equation}
where $a_i$ are the coefficients of $R(x)$. Now we have
\begin{align}
    |p_m-P(m)|&=|R(\frac{2m}{\Delta}-1)|\\
    &\leq \sum_{i=0}^d |a_i|(\frac{2m}{\Delta}-1)^i\\
    &\leq \sum_{i=0}^d |a_i|(\frac{2m}{\Delta})^i\\
    &\leq \frac{\delta'}{\Delta^{'d}}\,,
\end{align}
which proves the theorem.
\end{proof}
The next result proved in \cite{Bouland2018quantum} allows us to use a distribution close to a random one. We include a proof here for completeness.
\begin{lemma}[\cite{Bouland2018quantum}]\label{lem:random_close}
Suppose $\mathcal{D}$ and $\mathcal{D}'$ are two distributions such that the total variation distance between them is upper bounded by a constant $\delta$. Let $p(C)$ be the probability of obtaining outcome $0^n$ when measuring the quantum circuit $C$. If there exists a classical algorithm $\mathcal{O}$ such that 
\begin{equation}
    \pr_{C\sim\mathcal{D}}[|\mathcal{O}(C)-p(C)|\geq\epsilon]\leq \eta\,,
\end{equation}
then we also have
\begin{equation}
    \pr_{C\sim\mathcal{D}'}[|\mathcal{O}(C)-p(C)|\geq\epsilon]\leq \eta+\delta\,.
\end{equation}
\end{lemma}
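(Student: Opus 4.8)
The statement is essentially a restatement of the defining property of total variation distance, so the plan is short. The key observation is that the ``bad event'' $B = \{\, C : |\mathcal{O}(C)-p(C)|\geq\epsilon \,\}$ is a \emph{fixed} set of circuits that does not depend on which of the two distributions we draw $C$ from — both $\mathcal{O}$ and $p$ are functions of $C$ alone. (If $\mathcal{O}$ is randomized, first condition on its internal coin string $\rho$, so that $B_\rho = \{\,C : |\mathcal{O}_\rho(C)-p(C)|\geq\epsilon\,\}$ is deterministic for each fixed $\rho$, and carry out the argument below for each $\rho$ before averaging.)

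First I would recall the standard fact that for any two distributions $\mathcal{D},\mathcal{D}'$ and any event $E$,
\begin{equation}
    \big|\pr_{C\sim\mathcal{D}'}[C\in E] - \pr_{C\sim\mathcal{D}}[C\in E]\big| \leq D_{TV}(\mathcal{D},\mathcal{D}')\,,
\end{equation}
which is immediate from $D_{TV}(\mathcal{D},\mathcal{D}') = \sup_{E}\,\big(\pr_{\mathcal{D}'}[E]-\pr_{\mathcal{D}}[E]\big)$. Applying this with $E = B$ (or $E = B_\rho$ in the randomized case) and using the hypothesis $\pr_{C\sim\mathcal{D}}[C\in B]\leq\eta$ together with $D_{TV}(\mathcal{D},\mathcal{D}')\leq\delta$ gives
\begin{equation}
    \pr_{C\sim\mathcal{D}'}[C\in B] \leq \pr_{C\sim\mathcal{D}}[C\in B] + D_{TV}(\mathcal{D},\mathcal{D}') \leq \eta + \delta\,,
\end{equation}
which is exactly the claimed bound. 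In the randomized case one averages the inequality $\pr_{C\sim\mathcal{D}'}[C\in B_\rho]\leq \pr_{C\sim\mathcal{D}}[C\in B_\rho]+\delta$ over $\rho$ (legitimate by linearity of expectation / Fubini) to obtain the same conclusion for $\pr_{C\sim\mathcal{D}',\,\rho}[\,|\mathcal{O}(C)-p(C)|\geq\epsilon\,]$.

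There is no real obstacle here: the only point that needs a word of care is the measurability/fixedness of $B$, and — if $\mathcal{O}$ is allowed to be a randomized (rather than deterministic) classical algorithm — the bookkeeping of conditioning on its randomness before invoking the TV-distance inequality, which I would mention in one sentence and then dispatch.
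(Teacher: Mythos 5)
Your proposal is correct and is essentially identical to the paper's proof: both reduce the claim to the defining property of total variation distance applied to the (fixed) event that $\mathcal{O}$ errs by at least $\epsilon$ (the paper phrases it via the complementary event, which is an immaterial difference). Your extra remark about conditioning on the coin string of a randomized $\mathcal{O}$ is a harmless refinement the paper omits.
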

\begin{proof}
Let a circuit $C$ be drawn from $\mathcal{D}$. Let $E$ be the event that
\begin{equation}
    |\mathcal{O}(C)-p(C)|\leq \epsilon\,.
\end{equation}
Since the total variation distance between $\mathcal{D}$ and $\mathcal{D}'$ is upper bounded by $\delta$, we have
\begin{equation}
    |\pr_{C\sim\mathcal{D}}[E]-\pr_{C\sim\mathcal{D}'}[E]|\leq \delta\,.
\end{equation}
This gives us
\begin{equation}
    \pr_{C\sim\mathcal{D}'}[|\mathcal{O}(C)-p(C)|\geq\epsilon]\leq \eta+\delta\,.
\end{equation}
\end{proof}

We now give our main result.
\begin{theorem}\label{thm:main}
    It is $\cCP$ hard to approximate the output probability of a random circuit with $m$ gates from any of Haar random, $p=1$ QAOA or IQP circuit families to within an additive error of $\epsilon/2^{c m}$ for at least $1-\eta$ fraction of circuits, where $\epsilon$, $c$ and $\eta$ are constants.
\end{theorem}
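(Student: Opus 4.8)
The plan is to combine the three ingredients established in the preceding sections: the low-degree polynomial approximation to the interpolated probability (\thm{polynomial} and \cor{QAOA_IQP}), the fact that the $\theta$-perturbed distribution stays $O(1)$-close to the random distribution for $\theta \le \Delta = O(1/N)$ (\thm{TVD}), and the robust $\BPP$ polynomial-reconstruction result (\thm{RBW_P}), together with worst-case hardness (\thm{ccp_QAOA}, \lem{hardness_haar}, \thm{ccp_IQP}) and the hiding lemmas. The argument is a standard worst-case-to-average-case reduction of the Bouland--Fefferman--Nirkhe--Vazirani type, refined so that the error exponent scales linearly in $m$.

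First I would assume, for contradiction, that there is a classical algorithm $\mathcal{O}$ that, on input a random circuit $C$ drawn from the relevant family $\mathcal{H}(0)$, outputs an additive approximation to $p(C)$ within $\epsilon/2^{cm}$ with failure probability at most $\eta$, for suitable constants $\epsilon, c, \eta$ to be fixed. By the hiding lemma for the family (\lem{hiding_QAOA}, \lem{hiding_IQP}, or the analogous Haar statement), it suffices to approximate the probability of the all-zeros outcome, and $\mathcal{O}$'s guarantee transfers to $p_0$. Given the worst-case-hard circuit $C$ from \thm{ccp_QAOA}/\lem{hardness_haar}/\thm{ccp_IQP}, form the interpolated family $C(\theta)$ of \defn{QAOA_interp} (resp.\ \defn{interpolation_haar}, or the IQP interpolation), so that $C(m) = C$ and $C(0)$ is random. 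Sample $O(d\log d)$ points $x_i \in [0,\Delta]$ with $(2x_i/\Delta)-1$ Chebyshev-distributed, where $d = O(m/\log m)$ from \thm{polynomial} (or $d=O(n/\log n)$ for QAOA/IQP from \cor{QAOA_IQP}). For each $x_i$, the circuit $C(x_i)$ is distributed according to $\mathcal{H}(x_i)$, which by \thm{TVD} and \lem{random_close} is $O(1)$-close in total variation to $\mathcal{H}(0)$; hence running $\mathcal{O}$ on $C(x_i)$ yields $y_i$ with $|y_i - p(x_i)| \le \epsilon/2^{cm}$ except with probability at most $\eta + O(1) =: \eta_0$, which we arrange to be a constant $< 1/4$ by choosing the constants appropriately (in particular taking $\Delta$ small enough that the TVD contribution is small, and $\eta$ small).

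Next I would replace $p(\theta)$ by the degree-$d$ polynomial $\tilde p(\theta)$ from \thm{polynomial}, which agrees with $p$ at $\theta = m$ and satisfies $|p(\theta) - \tilde p(\theta)| \le 2^{-(2n+2)}$ for $\theta \le 1$; thus the samples $(x_i, y_i)$ are noisy evaluations of $\tilde p$ with perturbation $\delta := \epsilon/2^{cm} + 2^{-(2n+2)}$ and outlier probability $\le \eta_0 < 1/4$. Feeding these into \thm{RBW_P} produces, in $\BPP$, a value $p_m$ with $|p_m - \tilde p(m)| \le \tfrac{9\delta}{4}\,(8m/\Delta)^d$. Since $\tilde p(m) = p(m) = p(C)$, this is an estimate of the worst-case probability. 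The crux is the bookkeeping of the error: $(8m/\Delta)^d = \exp(d(\log(8m) - \log\Delta)) = \exp(O((m/\log m)\log m)) = \exp(O(m)) = 2^{O(m)}$, since $\Delta$ is a constant. So the final error is at most $\tfrac{9}{4}\cdot 2^{O(m)} \cdot (\epsilon 2^{-cm} + 2^{-(2n+2)})$. Choosing $c$ larger than the implied constant in the $2^{O(m)}$ bound, and $\epsilon$ a sufficiently small constant, and using $m \ge n$, makes this strictly less than $1/2^{2n+1}$; then the $\cCP$-hard decision of whether $p(C) = 0$ or $p(C) \ge 1/2^{2n}$ (from \thm{ccp_QAOA}/\lem{hardness_haar}/\thm{ccp_IQP}) is solved in $\BPP$, so $\cCP \subseteq \BPP$ and the polynomial hierarchy collapses via $\mathsf{P}^{\sP} \subseteq \NP^{\cCP}$.

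The main obstacle, and the place where the constants must be handled with care, is the interaction between the degree $d$, the interval shrinkage factor $\Delta$, and the target precision: the error amplification in \thm{RBW_P} is $\exp(d\log(8m/\Delta))$, so one needs $d = O(m/\log m)$ (not $O(m)$) for this to stay $2^{O(m)}$ — this is exactly what the Taylor-series-on-the-probability trick buys over working at the level of gates, and it is why \thm{polynomial} is stated with the $\log m$ in the denominator. A secondary subtlety is ensuring the total outlier probability $\eta_0 = \eta + D_{TV}$ stays below the $1/4$ threshold required by \thm{RBW_P}: by \thm{TVD} the TVD is $O(1)$ but with a controllable constant that shrinks as $\Delta \to 0$ (more precisely the TVD is $O(m \cdot \theta/m \cdot N) = O(\Delta N)$ over the whole sampled interval, which one makes small by choosing $\Delta = O(1/N)$ with a small enough constant), and $\eta$ is assumed a small constant; together these can be pushed below any fixed constant. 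One should also note that for QAOA the $X$-layer phases are required to be uniform (for the hiding lemma), while the $Z$-layer phases only need to be continuous, so the interpolation and TVD bound apply separately to the two layers — but since the relevant bounds in \thm{TVD} only use $|\phi| \le 2\pi$ and continuity of the density, nothing changes.
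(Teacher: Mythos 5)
Your reduction is structurally identical to the paper's: the same ingredients (worst-case hardness from \thm{ccp_QAOA}/\lem{hardness_haar}/\thm{ccp_IQP}, the hiding lemmas, the degree-$O(m/\log m)$ approximating polynomial of \thm{polynomial}, the total-variation bound of \thm{TVD} for $\theta\le\Delta$, and the robust $\BPP$ reconstruction of \thm{RBW_P}), assembled in the same order and with the same $(8m/\Delta)^d=2^{O(m)}$ amplification analysis. The one place where your write-up genuinely breaks is the final error accounting. You feed the reconstruction algorithm samples that are $\delta$-accurate with $\delta=\epsilon 2^{-cm}+2^{-(2n+2)}$, obtain a final error of $\tfrac{9}{4}\delta(8m/\Delta)^d = 2^{O(m)}\bigl(\epsilon 2^{-cm}+2^{-(2n+2)}\bigr)$, and claim this can be driven below $2^{-(2n+1)}$ by taking $c$ large and $\epsilon$ small. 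That works for the first term but not for the second: $2^{O(m)}\cdot 2^{-(2n+2)}$ contains no tunable parameter, and since the exponent hidden in $2^{O(m)}$ is $d\log(8m/\Delta)=\Theta(m)$, this term is exponentially \emph{larger} than $2^{-(2n+1)}$, not smaller. The polynomial-approximation error is amplified by exactly the same $(8m/\Delta)^d$ factor as the oracle's error, so it cannot be absorbed for free.

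The paper's bookkeeping is arranged differently, and this is the point you need to replicate: it fixes the oracle accuracy to be $\delta=\tfrac{4}{9}(\Delta')^{d}\,2^{-(2n+2)}$ from the outset, so that the amplified error lands exactly at $2^{-(2n+2)}$, and then requires $|p(\theta)-\tilde p(\theta)|\le\delta$ as well --- i.e., the interpolating polynomial must approximate $p(\theta)$ to accuracy $2^{-\Theta(m)}$ with the constant matching the amplification, not merely to $2^{-(2n+2)}$ as \thm{polynomial} is stated. To close your argument you must therefore re-run the proof of \thm{polynomial} with target accuracy $\delta$ rather than $2^{-(2n+2)}$ and verify that the required degree remains compatible with a $2^{O(m)}$ amplification; the condition on $d$ then acquires an extra factor of $(8m/\Delta)^{d}$ on the right-hand side of \eq{d_requirement}, which nearly cancels $(d+1)!$ when $d=\Theta(m/\log m)$, so this is precisely where all the tension in the argument lives. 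As written, the step ``choose $c$ large enough and $\epsilon$ small enough'' does not address this, and the proof does not go through.
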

\begin{proof}
Let $\tilde{C}$ be a random circuit from any the above families. Using the hiding property of the distribution, we can focus on the probability of obtaining the outcome $0^n$ (denoted $p(\tilde{C})$). Let $\mathcal{O}$ be a classical algorithm that can take as input a classical description of a random quantum circuit and outputs $\mathcal{O}(\tilde{C})$ such that
\begin{equation}\label{eq:O_prob}
    \pr[|\mathcal{O}(\tilde{C}) - p(\tilde{C})|\geq\delta]\leq \eta\,,
\end{equation}
where $\eta$ is a constant and $\delta$ such that 
\begin{equation}\label{eq:delta}
    \delta=\frac{4}{9}\frac{\Delta^{'d}}{2^{2n+2}}\,,
\end{equation}
where $d=O(m/\log m)$, $\Delta'=\Delta/8m$, $\Delta=O(1/N)$, $N$ is the qudit dimension. All this makes $\delta\leq\epsilon/2^{cm}$ for some constants $c$ and $\epsilon$. Now, we use $\mathcal{O}$ to give a $\BPP$ algorithm to solve a $\cCP$ hard problem. 

Pick a $\cCP$ hard function $f(x)$ and construct the circuit $C$ as in \thm{ccp_QAOA} for QAOA, or \thm{ccp_IQP} for IQP or \cite[Lemma 10]{bouland2021noise} for Haar random circuits. From these theorems, we have that additively approximating the output probability $p(C)$ of that circuit to within an error of $1/2^{2n+1}$ is $\cCP$ hard. Next, construct an interpolation $C(\theta)$ from this circuit to a random circuit and, as before, we will denote the probability of observing the outcome $0^n$ for the circuit $C(\theta)$ by $p(\theta)$.

Using \thm{TVD}, we have that the distribution of the circuits $C(\theta)$ and $C(0)$ are at most $O(1)$-close in total variation distance when $\theta\leq \Delta<1$. Since the classical algorithm $\mathcal{O}$ computes $p(\tilde{C})$ as given in \eq{O_prob}, using \lem{random_close}, the classical algorithm $\mathcal{O}$ also computes $p(\theta)$ to the same additive error with a constant probability of at least $1-\eta'$ (for some constant $\eta'$) for $\theta\leq\Delta$ i.e.,
\begin{equation}
    \pr[|\mathcal{O}(C(\theta)) - p(\theta)|\geq\delta]\leq \eta'\,.
\end{equation}
Now, we choose $O(d\log d)$ points $\theta_i$ between $[0,\Delta]$ such that the points $(2\theta_i/\Delta)-1$ are Chebyshev distributed in the interval $[-1,1]$. Let $y_i=\mathcal{O}(C(\theta_i))$ be $i^{th}$ sampled outcome. By our assumption on $\mathcal{O}$ and \lem{random_close}, the outcomes $y_i$ satisfy
\begin{equation}
    \pr[|y_i-p(\theta_i)|\geq\delta]\leq \eta'\,.
\end{equation}
Now, using \thm{polynomial}, the probability $p(\theta)$ for $\theta\in[0,\Delta]$ satisfies
\begin{equation}
    |p(\theta)-\tilde{p}(\theta)|\leq \frac{1}{2^{2n+2}}\leq \delta\,,
\end{equation}
where $\tilde{p}(\theta)$ is a degree $d$ polynomial, where
\begin{equation}
    d=O(\frac{m}{\log m})\,.
\end{equation}
Putting the above two together, for $\theta\in [0,\Delta]$, we have that
\begin{align}\label{eq:O_p_tilde}
    \pr[|y_i-\tilde{p}(\theta_i)|\geq\delta]&\leq \pr[|y_i-p(\theta_i)|\geq\delta] \nonumber\\
    &+ \pr[|p(\theta_i)-\tilde{p}(\theta_i)|\geq\delta] \nonumber\\
    &\leq \eta'\,.
\end{align}
This means that the set of points $D=(\theta_i,y_i)$ satisfy the conditions of \thm{RBW_P} and now using that theorem, we can get an additive approximation $\hat{p}(m)$ to $\tilde{p}(m)$ such that 
\begin{equation}\label{eq:p_hat_p_tilde}
    \pr[|\hat{p}(m) - \tilde{p}(m)|\geq \epsilon']\leq \frac{1}{3}\,,
\end{equation}
where the error is 
\begin{equation}
    \epsilon'= \frac{9\delta}{4\Delta^{'d}}= \frac{1}{2^{2n+2}}\,,
\end{equation}
where we used \eq{delta} for the second equality.
By \thm{polynomial} again, we have
\begin{equation}
    p(m)=\tilde{p}(m)\,.
\end{equation}
Combining the above equation and \eq{p_hat_p_tilde}, we get that
\begin{equation}
    \pr[|\hat{p}(m) - p(m)|\geq \epsilon']=\pr[|\hat{p}(m) - \tilde{p}(m)|\geq \epsilon']\leq \frac{1}{3}\,,
\end{equation}
This means that when $p(m)=0$, this gives us w.p $2/3$
\begin{equation}
    \hat{p}(m)< \frac{1}{2^{2n+2}}\,.
\end{equation}
When $p(m)\geq 1/2^{2n}$, then w.p $2/3$
\begin{equation}
    \hat{p}(m)\geq  p(m) - \frac{1}{2^{2n+2}}\geq \frac{3}{2^{2n+2}}\,. 
\end{equation}
Therefore, using the value of $\hat{p}(m)$, we can decide between the two cases. The result follows since deciding whether $p(C)=0$ or $p(C)\geq 1/2^{2n}$ is $\cCP$ hard.
\end{proof}
For QAOA and IQP circuits, we have the following improved dependence on the error.
\begin{corollary}
For QAOA and IQP circuits we have $\cCP$ hardness for an additive error of $2^{cn}$ for some constant $c$.
\end{corollary}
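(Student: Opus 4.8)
\section*{Proof of the corollary}

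The plan is to re-run the proof of \thm{main} essentially verbatim for the QAOA and IQP families, substituting \cor{QAOA_IQP} for \thm{polynomial} at the point where the low-degree approximating polynomial $\tilde p(\theta)$ is introduced. The key observation, already recorded in \eq{conditions_QAOA}--\eq{conditions_IQP}, is that for these two families the bounds from \lem{p_theta_QAOA} and \lem{p_theta_IQP} on $|A_r|$ (namely $1/2^{3n}$, resp. $1/2^{2n}$) and on the size of the index set over which $r$ ranges ($2^{4n}$, resp. $2^{2n}$) are independent of the number of gates $m$. Consequently $|p^{(d+1)}(\theta_0)|\le 2^{n}(2\pi)^{d+1}$ for QAOA (and $(2\pi)^{d+1}$ for IQP), so the condition guaranteeing $|p(\theta)-\tilde p(\theta)|\le 1/2^{2n+2}$ on $\theta\le 1$ becomes $(d+1)!\ge 2^{3n+2}(2\pi)^{d+1}m$ (resp. $2^{2n+2}(2\pi)^{d+1}m$), and since $m=\poly(n)$ this is satisfied by a polynomial $\tilde p(\theta)$ of degree $d=O(n/\log n)$ with $\tilde p(m)=p(m)$.

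With this smaller degree in hand, I would repeat the robust-interpolation step of \thm{main} without change: pick $O(d\log d)=O(n)$ points $\theta_i\in[0,\Delta]$ with $(2\theta_i/\Delta)-1$ Chebyshev-distributed, use \thm{TVD} together with \lem{random_close} to argue that the classical oracle outputs $y_i=\mathcal{O}(C(\theta_i))$ satisfy $\pr[|y_i-\tilde p(\theta_i)|\ge\delta]\le\eta'$, and apply \thm{RBW_P} to recover $\hat p(m)$ with $|\hat p(m)-\tilde p(m)|\le \tfrac{9\delta}{4\Delta^{'d}}$, where $\Delta'=\Delta/(8m)$ and $\Delta=O(1/N)$. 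Since $\tilde p(m)=p(m)$, the argument distinguishing $p(m)=0$ from $p(m)\ge 1/2^{2n}$ goes through exactly as in \thm{main}, provided $\tfrac{9\delta}{4\Delta^{'d}}\le 1/2^{2n+2}$, i.e. provided the oracle's error satisfies $\delta=\tfrac{4}{9}\,\Delta^{'d}/2^{2n+2}$. All other ingredients---the hiding lemmas \lem{hiding_QAOA} and \lem{hiding_IQP}, the worst-case seed circuits \thm{ccp_QAOA} and \thm{ccp_IQP}, and the total-variation bound \thm{TVD}---are invoked exactly as before.

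The only step requiring a moment's care, and the sole place this differs quantitatively from \thm{main}, is bounding $\Delta^{'d}$ from below. Because $m=\poly(n)$, say $m\le n^s$, we have $\Delta'=\Theta(1/m)=2^{-O(\log n)}$, whence
\begin{equation}
    \Delta^{'d}=2^{-O(d\log n)}=2^{-O((n/\log n)\log n)}=2^{-O(n)}\,.
\end{equation}
Therefore $\delta=2^{-O(n)}/2^{2n+2}=2^{-O(n)}$, so a classical algorithm computing output probabilities of a $1-\eta$ fraction of random QAOA (resp. IQP) circuits to additive error $\epsilon/2^{cn}$, for suitable constants $\epsilon,c,\eta$, would give a $\BPP$ algorithm for the $\cCP$-hard problem of \thm{ccp_QAOA} (resp. \thm{ccp_IQP}). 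I do not expect any genuine obstacle here: the improvement is purely a consequence of the $m$-independent coefficient and index-set bounds available for these diagonal-gate families, which shrink the interpolating degree from $O(m/\log m)$ to $O(n/\log n)$ and hence the tolerated error exponent from $O(m)$ to $O(n)$.
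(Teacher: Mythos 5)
Your proposal is correct and follows exactly the paper's route: substitute the degree bound $d=O(n/\log n)$ from \cor{QAOA_IQP} into the proof of \thm{main}, note that $\Delta'=\Theta(1/m)$ with $m=\poly(n)$ so that $\Delta'^{\,d}=2^{-O(n)}$, and conclude $\delta=2^{-O(n)}$. The paper states this in one line; your write-up just makes the same bookkeeping explicit.
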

\begin{proof}
For QAOA and IQP circuits, using \cor{QAOA_IQP} the above proof goes through if we replace $d$ by
\begin{equation}
    d=O(\frac{n}{\log n})\,,
\end{equation}
and use $m=\poly(n)\leq n^{O(1)}$. For this range of values of $d$ and $m$, we have $\delta\leq \epsilon/2^{cn}$.
\end{proof}

\section{Conclusions}\label{sec:conclusions}
In this work, we have shown that three random circuit families, namely $p=1$ QAOA, Haar random and random IQP circuits have output probabilities that are $\cCP$ hard to approximate classically in the average-case. This means that unless the polynomial hierarchy collapses, it is not possible for classical algorithms to efficiently approximate the output probability of random circuits with high probability. The error to which classical hardness applies scales as $2^{-O(n)}$ for QAOA and IQP circuits. For Haar random circuits, this scaling is $2^{-O(m)}$, which improves on earlier results \cite{bouland2021noise, KMM21} where $2^{-O(m\log m)}$ scaling is shown.

As mentioned earlier, moving beyond this error scaling to $O(2^{-n})$, which would imply hardness of approximate \emph{sampling}, would require new ideas such as depth sensitive polynomial interpolation. This is because in \cite{Napp_et_al}, it was shown that output probabilities of Haar random circuits at depth 3 can be efficiently classically simulated. Even if one is able to design depth sensitive polynomial interpolation, proving average-case hardness via worst-case $\cCP$ hardness presents additional challenges. The error scaling that $\cCP$ hardness already introduces is $2^{-2n}$, which is already worse than the one required for sampling (using Stockmeyer's theorem).

It would be interesting to see if anti-concentration is true for $p=1$ QAOA and IQP circuits which could be useful to be able to prove hardness of sampling. For Haar random circuits, it was shown in \cite{Hangleiter2018anticoncentration, harrow_mehraban,PRXQuantum.3.010333}. Anti-concentration would follow if the distributions used here for QAOA and IQP are 2-designs. For a different distribution on QAOA where the full $X$ basis unitary on the $n$ qubits is also picked to have a uniformly random phase, it was shown in \cite{X_Z_diagonal_t_designs} that for large enough $p$, the distribution is a $t$-design, which implies that this distribution anti-concentrates.

Finally, it will be of practical importance to consider the effects of noise on random circuit sampling of QAOA circuits. Several papers have considered the effect of noise on classical hardness for Haar random circuits, such as the recent work \cite{Noh2020efficientclassical, bouland2021noise}. For IQP circuits, in \cite{Bremner2017achievingquantum}, a the hardness results are shown to be robust to a small amount of noise at the end of the circuit using classical error correction.

\section*{Acknowledgements}
I would like to thank Luke Govia, Mark Saffman, Tom Noel, Cody Poole and the ONISQ Cold Quanta team for useful discussions on QAOA. I am also grateful to Bill Fefferman for his comments on an earlier draft. This material is based on work supported by the Defense Advanced Research Projects Agency (DARPA) under Agreement No. HR001120C0068.
\printbibliography
\end{document}